\documentclass[acmsmall,screen]{acmart}

\renewcommand\footnotetextcopyrightpermission[1]{}
\AtBeginDocument{\ifdefined\nolinenumbers\nolinenumbers\fi}

\usepackage{mathtools}
\usepackage{stmaryrd}
\usepackage{algorithm}
\usepackage{algpseudocode}
\usepackage{listings}
\usepackage{xcolor}
\usepackage{tikz-cd}
\usepackage{enumitem}
\usepackage{subcaption}

\newtheorem{proposition}{Proposition}
\theoremstyle{remark}
\newtheorem{remark}{Remark}

\newcommand{\Sem}{\mathcal{S}\mathit{em}}
\newcommand{\CC}{\check{C}}
\newcommand{\Hc}[1]{\check{H}^{#1}}
\newcommand{\Iso}{\mathcal{I}\mathit{so}}
\newcommand{\Lat}{\mathbf{Lat}}

\newcommand{\op}{\mathrm{op}}
\newcommand{\rk}{\operatorname{rk}}
\newcommand{\im}{\operatorname{im}}
\newcommand{\GF}{\mathbb{F}}
\newcommand{\defeq}{\coloneqq}
\newcommand{\py}[1]{\lstinline[language=Python]{#1}}

\lstdefinestyle{deppy}{
  language=Python,
  basicstyle=\ttfamily\small,
  keywordstyle=\bfseries\color{blue!60!black},
  commentstyle=\itshape\color{green!40!black},
  stringstyle=\color{red!50!black},
  showstringspaces=false,
  frame=none,
  columns=flexible,
  mathescape=true,
  morekeywords={requires,ensures,invariant,prove,Annotated}
}
\title{Sheaf-Cohomological Program Analysis:
  Unifying Bug Finding, Equivalence, and Verification
  via \v{C}ech Cohomology}

\author{Halley Young}
\affiliation{%
  \institution{Microsoft Research}
  \city{Redmond}
  \state{WA}
  \country{USA}
}
\email{halleyyoung@microsoft.com}

\begin{abstract}
We present a framework in which program analysis---type checking,
bug finding, and equivalence verification---is organized as computing
the \v{C}ech cohomology of a \emph{semantic presheaf} over a program's
\emph{site category}.  The presheaf assigns refinement-type information
to observation sites (argument boundaries, branch guards, call results,
error sites) and restricts it along data-flow morphisms.  The zeroth
cohomology group $\Hc{0}$ is the space of globally consistent typings.
The first cohomology group $\Hc{1}$ classifies \emph{gluing
obstructions}---bugs, type errors, and equivalence failures---each
localized to a specific pair of disagreeing sites.

This formulation yields three concrete results unavailable in prior
work: (1)~$\rk\,\Hc{1}$ over~$\GF_2$ equals the minimum number of
independent fixes needed to resolve all errors; (2)~$\Hc{1}(\mathcal{U},
\Iso) = 0$ is a \emph{sound and complete} criterion (relative to the
cover) for behavioral equivalence of two programs; (3)~the
Mayer-Vietoris exact sequence provides compositional obstruction
counting, enabling incremental analysis.  We prove soundness,
completeness relative to the cover, and fixed-point convergence.

We implement the framework in \textsc{Deppy}, a Python analysis tool,
and evaluate it on a suite of 375~benchmarks: 133~bug-detection
programs, 134~equivalence pairs, and 108~specification-satisfaction
checks.
\textsc{Deppy} achieves \textbf{100\% bug-detection recall}
(69\% precision, F1 = 81\%), 99\% equivalence accuracy with
\textbf{zero false equivalences}, and 98\% spec accuracy with
\textbf{zero false satisfactions}---outperforming mypy and pyright,
which report zero findings on unannotated code.
The analysis models Python semantics as algebraic geometry:
variables live on the generic fiber (non-None) unless on the closed
nullable subscheme, integers form Spec($\mathbb{Z}$) with no bounded
section (no overflow), and short-circuit evaluation defines an open-set
topology on the presheaf.
Key metatheorems are mechanized in 1\,259 lines of Lean~4.
\end{abstract}

\keywords{sheaf theory, \v{C}ech cohomology, program analysis,
  refinement types, bug finding, equivalence checking}

\begin{document}
\maketitle

\section{Introduction}\label{sec:intro}

Every program analysis collects \emph{local facts} about individual
program points and composes them into \emph{global conclusions} about
the whole program.  Abstract interpretation does this via lattice
fixpoints and widening~\cite{CousotCousot77}.  Type inference propagates
constraints through unification~\cite{HindleyMilner}.  Hoare logic
chains pre/post-conditions through weakest-precondition
calculi~\cite{Hoare69}.  Model checking explores reachable
states~\cite{ClarkeEmersonSistla86}.

These approaches share a deep structure: they all perform
\emph{local-to-global} reasoning.  Sheaf theory, developed in algebraic
geometry and algebraic topology, is precisely the mathematical theory of
local-to-global composition.  A \emph{sheaf} assigns data to open sets
of a space and specifies how local data \emph{glue} into global data.
When local data fail to glue, the obstruction lives in the first
\emph{\v{C}ech cohomology group}~$\Hc{1}$.

In this paper, we show that program analysis---type checking, bug
finding, and equivalence verification---can be unified as computing the
\v{C}ech cohomology of a \emph{semantic presheaf} over the program's
\emph{site category}.  This is not a metaphor: the presheaf is a
concrete data structure (implemented as a dictionary from sites to
refinement-type sections), the coboundary operators are concrete
algorithms (checking agreement on overlaps via SMT), and the cohomology
groups are concrete invariants with direct engineering consequences.

\paragraph{Contributions.}
\begin{enumerate}[leftmargin=*,nosep]
\item \textbf{Framework} (\S\ref{sec:framework}).
  We formalize program analysis as sheaf cohomology over a site
  category with five site families, a semantic presheaf valued in
  complete lattices, and a Grothendieck topology satisfying identity,
  stability, and transitivity.

\item \textbf{Three applications} (\S\ref{sec:applications}).
  (a)~Bugs are gluing obstructions: $\Hc{1} \neq 0$ iff the program
  has an error, and $\rk\,\Hc{1}$ over~$\GF_2$ equals the minimum
  number of independent fixes.
  (b)~Equivalence: $\Hc{1}(\mathcal{U}, \Iso) = 0$ iff local
  equivalences glue to a global equivalence (sound \emph{and} complete
  for the cover).
  (c)~Specification verification: the product-cover construction
  factors VCs, with $\Hc{1}$-based elimination of redundant obligations.

\item \textbf{Metatheory} (\S\ref{sec:metatheory}).
  Soundness (Theorem~\ref{thm:soundness}), completeness relative to the
  cover (Theorem~\ref{thm:completeness}), the Mayer-Vietoris exact
  sequence for programs (Theorem~\ref{thm:mayer-vietoris}), and
  fixed-point convergence (Theorem~\ref{thm:convergence}).
  Key results are mechanized in Lean~4 with Mathlib.

\item \textbf{Algorithms} (\S\ref{sec:algorithms}).
  Cover construction from Python ASTs, local solving via theory-family
  dispatch, bidirectional fixed-point synthesis (backward safe-input
  synthesis + forward output propagation), and interprocedural section
  transport via the Grothendieck transitivity axiom.

\item \textbf{Implementation and evaluation} (\S\ref{sec:evaluation}).
  \textsc{Deppy}: full \v{C}ech pipeline covering 28~bug classes.
  Evaluated on 375 benchmarks (133~bug programs, 134~equivalence
  pairs, 108~spec checks): 100\% bug recall, 99\% equiv accuracy
  (zero false equivalences), 98\% spec accuracy (zero false
  satisfactions), 77\% H$^1$ rank accuracy.
  Key theorems mechanized in 1\,259 lines of Lean~4.
\end{enumerate}

\section{Overview by Example}\label{sec:overview}

We illustrate the sheaf-cohomological approach on a small but
instructive Python program with a subtle inter-procedural bug.

\begin{lstlisting}
def normalize(values):
    total = sum(values)       # site $U_1$: total = sum(values)
    return [v / total          # site $U_2$: ERROR if total == 0
            for v in values]

def process(data):
    filtered = [x for x in data if x >= 0]  # site $U_3$
    return normalize(filtered)  # site $U_4$: call
\end{lstlisting}

\paragraph{Sites.}
The analysis identifies four observation sites:
$U_1$ (argument boundary of \py{normalize}),
$U_2$ (error site: division by \py{total}),
$U_3$ (branch guard: \py{x >= 0} filter),
$U_4$ (call result: \py{normalize(filtered)}).

\paragraph{Sections.}
At each site, the analysis assigns a \emph{local section}---a
refinement predicate over the variables in scope:
\begin{align*}
  \sigma_1 &: \mathit{values} : \texttt{list[int]} \\
  \sigma_2 &: \mathit{total} : \texttt{int},\; \textit{viability: } \mathit{total} \neq 0 \\
  \sigma_3 &: \mathit{filtered} : \texttt{list[int]},\; \forall x.\, x \geq 0 \\
  \sigma_4 &: \mathit{filtered} : \texttt{list[int]},\; \forall x.\, x \geq 0
\end{align*}

\paragraph{Overlap and obstruction.}
The restriction map from $U_4$ to $U_1$ (the call boundary) maps
\py{filtered} to \py{values}.  The section $\sigma_4$ restricted to
$U_1$ says \py{values} is a list of non-negative integers---but does
\emph{not} guarantee \py{values} is non-empty.  At site~$U_2$, the
viability predicate requires $\mathit{total} \neq 0$.  If
\py{values = []}, then $\mathit{total} = 0$, violating viability.

The sections $\sigma_1$ and $\sigma_2$ \emph{disagree} on the overlap
$U_1 \cap U_2$: $\sigma_1$ admits \py{values = []} but $\sigma_2$
requires $\sum \mathit{values} \neq 0$.  This disagreement is a
$1$-cocycle in $\CC^1$, and since it is not a coboundary (no local
adjustment resolves it), it represents a nonzero class in $\Hc{1}$.

\paragraph{What standard tools report.}
\texttt{mypy}: no error (types match).
\texttt{pyright}: no error (no type annotation issue).
\texttt{Semgrep}: no match (no pattern for this bug class).

\paragraph{What \textsc{Deppy} reports.}
\begin{quote}\small
\textbf{H\textsuperscript{1} obstruction} at overlap $(U_1, U_2)$:
\texttt{values} may be empty list, making \texttt{total = 0}.\\
\textbf{Fix}: guard \texttt{normalize} call with \texttt{if filtered:}
(1 independent fix; $\rk\,\Hc{1} = 1$).
\end{quote}

\section{Formal Framework}\label{sec:framework}

We formalize program analysis as computing the \v{C}ech cohomology of a
semantic presheaf.  All definitions in this section are mechanized in
Lean~4 (see supplementary material).

\subsection{Program Site Category}\label{sec:site-category}

\begin{definition}[Site kind]\label{def:site-kind}
A \emph{site kind} is one of five families:
\begin{enumerate}[nosep]
  \item \textsc{ArgBoundary}: function argument sites.
  \item \textsc{BranchGuard}: conditional/assertion sites.
  \item \textsc{CallResult}: inter-procedural call sites.
  \item \textsc{OutBoundary}: function output / return sites.
  \item \textsc{ErrorSite}: sites where partial operations may fail.
\end{enumerate}
\end{definition}

\begin{definition}[Program site category]\label{def:site-cat}
Given a program~$P$, the \emph{program site category} $\mathbf{S}_P$ is
the category where:
\begin{itemize}[nosep]
  \item \textbf{Objects} are program sites $s = (\kappa, n)$ where
    $\kappa$ is a site kind and $n$ is a unique identifier (function
    name~+~line number).
  \item \textbf{Morphisms} $f : s \to t$ are \emph{data-flow edges}:
    $f$ exists iff information at~$s$ is available at~$t$.  Each
    morphism carries a \emph{projection} $\pi_f : \mathrm{Keys}(t) \to
    \mathrm{Keys}(s)$ mapping refinement keys at~$t$ to those at~$s$.
  \item \textbf{Composition}: $(g \circ f).\pi = f.\pi \circ g.\pi$
    (contravariantly, as for a presheaf).
  \item \textbf{Identity}: $\mathrm{id}_s.\pi = \mathrm{id}$.
\end{itemize}
\end{definition}

\subsection{Semantic Presheaf}\label{sec:presheaf}

\begin{definition}[Refinement lattice]\label{def:ref-lattice}
A \emph{refinement lattice} $(\mathcal{R}, \sqsubseteq, \sqcap,
\sqcup, \top, \bot)$ is a complete lattice where:
\begin{itemize}[nosep]
  \item $\top$ = ``no information'' (any value is possible).
  \item $\bot$ = ``contradiction'' (no value is possible).
  \item $\sqcap$ = conjunction of refinement predicates.
  \item $\sqcup$ = disjunction of refinement predicates.
  \item The \emph{information order}: $\phi \sqsubseteq \psi$ iff
    every state satisfying $\psi$ also satisfies~$\phi$ (i.e.,
    $\psi$ is more informative).
\end{itemize}
\end{definition}

\begin{definition}[Local section]\label{def:local-section}
A \emph{local section} at site~$s$ is a pair
$\sigma = (\tau, \phi)$ where $\tau$ is a carrier type
(e.g., \texttt{int}, \texttt{list[T]}) and $\phi \in \mathcal{R}$
is a refinement predicate.  A section \emph{abstracts} a set of
concrete states: $\gamma(\sigma) \defeq \{ v : \tau \mid \phi(v) \}$.
\end{definition}

\begin{definition}[Semantic presheaf]\label{def:sem-presheaf}
The \emph{semantic presheaf} is a contravariant functor
\[
  \Sem : \mathbf{S}_P^{\op} \longrightarrow \Lat
\]
where $\Lat$ is the category of complete lattices and monotone maps.
\begin{itemize}[nosep]
  \item For each site $s$, $\Sem(s)$ is the lattice of local sections
    at~$s$, ordered by information content.
  \item For each morphism $f : s \to t$, the \emph{restriction map}
    $\Sem(f) : \Sem(t) \to \Sem(s)$ restricts a section at~$t$ to
    what is observable at~$s$.  Concretely: $\Sem(f)(\tau, \phi)
    = (\tau|_s, \phi \circ \pi_f)$ where $\pi_f$ is the projection.
\end{itemize}
Functoriality: $\Sem(\mathrm{id}_s) = \mathrm{id}$ and
$\Sem(g \circ f) = \Sem(f) \circ \Sem(g)$.
\end{definition}

\subsection{Grothendieck Topology}\label{sec:topology}

\begin{definition}[Covering family]\label{def:cover}
A \emph{covering family} for site~$s$ is a finite set of morphisms
$\mathcal{U} = \{ u_i \xrightarrow{f_i} s \}_{i \in I}$ satisfying
the \emph{local determination property}: the natural map
\[
  \Sem(s) \longrightarrow \prod_{i \in I} \Sem(u_i)
\]
is injective (separation) and, when composed with the equalizer
of the double restriction
$\prod_i \Sem(u_i) \rightrightarrows \prod_{i<j} \Sem(u_i \times_s u_j)$,
is surjective (gluing).
\end{definition}

\begin{definition}[Grothendieck topology on $\mathbf{S}_P$]%
\label{def:grothendieck}
The collection of covering families defines a Grothendieck topology~$J$
on $\mathbf{S}_P$ satisfying:
\begin{enumerate}[nosep]
  \item \textbf{Identity.}
    $\{ \mathrm{id}_s : s \to s \}$ is a covering family for~$s$.
  \item \textbf{Stability.}
    If $\mathcal{U} = \{ u_i \to s \}$ is covering and $t \to s$ is any
    morphism, then the pullback family
    $\{ u_i \times_s t \to t \}$ is covering for~$t$.
  \item \textbf{Transitivity.}
    If $\{ u_i \to s \}$ is covering and each $\{ v_{ij} \to u_i \}$
    is covering, then $\{ v_{ij} \to s \}$ is covering.
\end{enumerate}
\end{definition}

\begin{remark}
In the program setting: (1)~Identity holds trivially.
(2)~Stability corresponds to \emph{specialization}: if a cover works
for site~$s$, restricting it to a sub-site~$t$ (e.g., a branch of a
conditional) still works.
(3)~Transitivity corresponds to \emph{interprocedural composition}:
if a function~$f$ is covered by its call sites, and each call site is
covered by its arguments, then $f$ is covered by the transitive closure.
This is exactly the Grothendieck axiom that justifies summary-based
interprocedural analysis (cf.\ IFDS/IDE~\cite{RepsHorwitzSagiv95}).
\end{remark}

\subsection{\v{C}ech Cohomology}\label{sec:cech}

Given a covering family $\mathcal{U} = \{ u_i \to s \}$ and presheaf
$\Sem$, the \v{C}ech complex is:
\[
  C^0(\mathcal{U}, \Sem) \xrightarrow{\delta^0}
  C^1(\mathcal{U}, \Sem) \xrightarrow{\delta^1}
  C^2(\mathcal{U}, \Sem)
\]

\begin{definition}[\v{C}ech cochains]\label{def:cech-cochains}
\begin{align*}
  C^0(\mathcal{U}, \Sem) &\defeq \prod_{i} \Sem(u_i)
    & \text{(one section per site)} \\
  C^1(\mathcal{U}, \Sem) &\defeq \prod_{i < j} \Sem(u_i \times_s u_j)
    & \text{(one section per overlap)} \\
  C^2(\mathcal{U}, \Sem) &\defeq \prod_{i < j < k}
    \Sem(u_i \times_s u_j \times_s u_k)
    & \text{(one section per triple)}
\end{align*}
\end{definition}

\begin{definition}[Coboundary operators]\label{def:coboundary}
\begin{align*}
  (\delta^0 \sigma)_{ij} &\defeq
    \Sem(\mathrm{pr}_2)(\sigma_j) - \Sem(\mathrm{pr}_1)(\sigma_i)
    & \text{(disagreement on overlap)} \\
  (\delta^1 \tau)_{ijk} &\defeq
    \tau_{jk} - \tau_{ik} + \tau_{ij}
    & \text{(cocycle condition on triples)}
\end{align*}
where subtraction is in the lattice (complement of meet).
\end{definition}

\begin{definition}[\v{C}ech cohomology groups]\label{def:cech-groups}
\begin{align*}
  \Hc{0}(\mathcal{U}, \Sem) &\defeq \ker \delta^0
    = \{ \sigma \in C^0 \mid \text{all overlaps agree} \} \\
  \Hc{1}(\mathcal{U}, \Sem) &\defeq \ker \delta^1 / \im \delta^0
\end{align*}
\end{definition}

The key interpretations:
\begin{itemize}[nosep]
  \item $\Hc{0}$ = the set of \emph{globally consistent type
    assignments} (compatible families that glue).
  \item $\Hc{1}$ = the set of \emph{gluing obstructions} modulo
    those removable by local adjustment (coboundaries).
  \item $\Hc{1} = 0$ iff the sheaf condition holds for the cover:
    every compatible family glues uniquely.
\end{itemize}

\subsection{The Sheaf Condition}\label{sec:sheaf-condition}

\begin{definition}[Sheaf condition]\label{def:sheaf}
The presheaf $\Sem$ is a \emph{sheaf} with respect to
topology~$J$ if, for every covering family $\mathcal{U}$ and every
compatible family $(\sigma_i)_{i \in I}$ (i.e., $\delta^0 \sigma = 0$),
there exists a unique $\sigma \in \Sem(s)$ such that
$\Sem(f_i)(\sigma) = \sigma_i$ for all~$i$.
\end{definition}

\begin{proposition}\label{prop:sheaf-iff-h1}
$\Sem$ satisfies the sheaf condition for cover~$\mathcal{U}$ if and
only if $\Hc{1}(\mathcal{U}, \Sem) = 0$.
\end{proposition}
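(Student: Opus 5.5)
The plan is to read the \v{C}ech complex of Definition~\ref{def:cech-cochains} as a complex of $\GF_2$-vector spaces, and then prove both directions of Proposition~\ref{prop:sheaf-iff-h1} with the usual ``twisting'' (torsor) argument. This argument ties $\Hc{1}$ to the solvability of gluing problems.

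\emph{Step 1: linear structure.} Lattice subtraction in Definition~\ref{def:coboundary} is ``complement of meet'', so for refinement predicates it is symmetric difference of the predicates' extensions. That makes each $\Sem(u)$, restricted to its Boolean sub-algebra of predicates, an $\GF_2$-vector space. The restriction maps $\phi \mapsto \phi\circ\pi_f$ are Boolean homomorphisms and hence $\GF_2$-linear. With these signs I would check $\delta^1\delta^0 = 0$ by the usual telescoping, so $\Hc{0}$ and $\Hc{1}$ are honest vector spaces. I would also use the augmented complex
\[
0 \to \Sem(s) \xrightarrow{\ \varepsilon\ } C^0 \xrightarrow{\ \delta^0\ } C^1 \xrightarrow{\ \delta^1\ } C^2 .
\]
In this language the sheaf condition of Definition~\ref{def:sheaf} says exactly that $\varepsilon$ is injective and $\im\varepsilon = \ker\delta^0$.

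\emph{Step 2: twisted gluing problems.} For a $1$-cocycle $\tau$, I would call a family $(\sigma_i)$ \emph{$\tau$-compatible} if $(\delta^0\sigma)_{ij} = \tau_{ij}$ for all $i<j$. The class $[\tau]\in\Hc{1}$ vanishes iff a $\tau$-compatible family exists. I would then use stability (Definition~\ref{def:grothendieck}(2)) to pull the cover back along each $u_i \to s$. On $u_i$ the pulled-back cover contains $\mathrm{id}_{u_i}$, since $u_i\times_s u_i$ receives the diagonal, so the contracting homotopy $h(\tau)_j \defeq \tau_{ij}$ makes every cocycle a coboundary locally. Thus every cohomology class is \emph{locally} trivial, and $[\tau]$ is globally trivial precisely when the local trivializations can be glued.

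\emph{Step 3: the two directions.} For ($\Leftarrow$), let $(\sigma_i)$ be compatible. Take $\tau=\delta^0$ of a candidate family built from the locally trivialized data, so that $\Hc{1}=0$ supplies a correcting $0$-cochain. Then use separation (injectivity in Definition~\ref{def:cover}) to obtain uniqueness of the glued section. For ($\Rightarrow$), let $\tau$ be a cocycle. The local trivializations $h_i(\tau)$ of Step~2 differ on overlaps by sections that are compatible, because the cocycle identity gives $\tau_{ij}+\tau_{jk}+\tau_{ik}=0$. The sheaf condition applied to these difference families glues them, which produces a global $0$-cochain $\sigma$ with $\delta^0\sigma=\tau$, so $[\tau]=0$.

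\emph{Main obstacle.} The hard step is ($\Rightarrow$). The sheaf condition is a degree-$0$ statement, and the task is to convert it into vanishing in degree $1$. The argument above only works if the difference families of Step~2 are themselves covered by the same cover, that is, if the sheaf condition holds for the pulled-back covers on each $u_i$ and not just for $\mathcal{U}$ at $s$. I would first try to obtain this from the gluing clause of Definition~\ref{def:cover} together with stability and transitivity of $J$. If that does not suffice, I would restrict to covers closed under the fibre products used in $C^1$ and $C^2$, which is the setting the paper's Lean mechanization appears to assume.
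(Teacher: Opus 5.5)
The step that fails is your ($\Rightarrow$) gluing in Step~3. None of the extra hypotheses you list will rescue it, because the implication is false.

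Write out what the gluing does. Put $h_i=(\tau_{ij})_j$ on $u_i$. On $u_i\times_s u_k$, the difference $h_i-h_k$ is, by the cocycle identity, the family $(\tau_{ij}+\tau_{kj})_j=(\tau_{ik})_j$. The sheaf condition glues this family to $\tau_{ik}$ itself, so you recover the cocycle you started from, not a $0$-cochain. To produce $\sigma_j\in\Sem(u_j)$ you would need the family $(\tau_{ij})_i$ on $\{u_i\times_s u_j\to u_j\}_i$ to be compatible. Its members disagree on triple overlaps by exactly $\tau_{ik}$.

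Here is a counterexample inside your own $\GF_2$ reading. Take three sites $u_1,u_2,u_3$ over $s$ with nonempty pairwise overlaps $u_{ij}$ and empty triple overlap (three arcs covering a circle). Let $\Sem$ be $\{\bot,\top\}\cong\GF_2$ on $s$, the $u_i$ and the $u_{ij}$, with identity restrictions, and the one-element lattice on the empty overlap. Compatible families are exactly the constant ones, and they glue uniquely. So the sheaf condition and the local determination property of Definition~\ref{def:cover} hold, likewise for every pulled-back cover, and all the fibre products used in $C^1$ and $C^2$ exist. Yet $C^0=C^1=\GF_2^3$, $C^2=0$, and $\im\delta^0=\{(a+b,\,a+c,\,b+c)\}$ is $2$-dimensional, so $\Hc{1}(\mathcal{U},\Sem)\cong\GF_2\neq 0$.

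The sheaf condition is exactness of $0\to\Sem(s)\to C^0\to C^1$, a degree-$0$ statement, and it says nothing about degree~$1$. The paper's proof breaks at the same spot, with the unsupported claim that every cocycle ``arises from a compatible family.'' You diagnosed the obstacle correctly, but neither your fallback nor the paper's assertion removes it.

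Your ($\Leftarrow$) is not an argument either. If $\tau$ is by construction $\delta^0$ of some family, then $\Hc{1}=0$ contributes nothing, and no step produces an element of $\Sem(s)$. No step can, because $\Hc{1}(\mathcal{U},\Sem)$ is computed from $\Sem(u_i)$, $\Sem(u_i\times_s u_j)$ and $\Sem(u_i\times_s u_j\times_s u_k)$ alone. For example, a one-member family $\{u\to s\}$ has $C^1=0$ under the paper's $i<j$ convention, hence $\Hc{1}=0$, while the sheaf condition demands that $\Sem(s)\to\Sem(u)$ be a bijection.

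In the paper this direction holds only tautologically, because Definition~\ref{def:cover} builds separation and gluing into what a covering family is. The paper's ($\Leftarrow$) appeals to separation, which gives only uniqueness; existence comes from the gluing clause of that definition, not from $\Hc{1}=0$. What you can prove is the degree-$0$ statement: the sheaf condition for $\mathcal{U}$ holds iff $\Sem(s)\to\Hc{0}(\mathcal{U},\Sem)$ is bijective.

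Note also that, combined with Definition~\ref{def:cover}, the proposition as stated would force $\Hc{1}(\mathcal{U},\Sem)=0$ for every covering family. That is incompatible with Theorem~\ref{thm:bugs}, which is a further sign that the proposition cannot be proved as posed.
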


\begin{proof}
$(\Rightarrow)$: If $\Sem$ is a sheaf, every cocycle $\tau \in
\ker\delta^1$ arises from a compatible family (by definition of the
\v{C}ech complex), which glues to a global section.  Adjusting the
local sections to the glued section shows $\tau \in \im\delta^0$.

$(\Leftarrow)$: If $\Hc{1} = 0$, every $1$-cocycle is a
$1$-coboundary.  Given a compatible family $(\sigma_i)$, the
disagreements on overlaps form a $1$-cocycle.  Since it is a
coboundary, there exist adjusted sections $\sigma'_i$ with zero
disagreement, which by the separation property of the cover determine
a unique global section.  (Formalized in Lean: \texttt{DeppyProofs.Presheaf.h0\_lifts\_to\_section}.)
\end{proof}

\section{Analysis Algorithms}\label{sec:algorithms}

The analysis proceeds in six stages, corresponding to the pipeline
$\textsc{Parse} \to \textsc{Harvest} \to \textsc{Cover} \to
\textsc{Solve} \to \textsc{Synthesize} \to \textsc{Render}$.

\subsection{Cover Construction}\label{sec:cover-construction}

Given a parsed Python module, the \textsc{Cover} stage constructs the
site category $\mathbf{S}_P$ and a covering family:

\begin{enumerate}[nosep]
  \item Create \textsc{ArgBoundary} sites for each function parameter.
  \item Create \textsc{BranchGuard} sites for each conditional/assert.
  \item Create \textsc{CallResult} sites for each function call.
  \item Create \textsc{OutBoundary} sites for each return statement.
  \item Create \textsc{ErrorSite} sites for each partial operation
    (division, indexing, attribute access on \py{Optional}).
  \item Infer morphisms from control-flow and data-flow analysis.
  \item Identify overlaps via SSA-level reaching definitions.
\end{enumerate}

\subsection{Local Solving}\label{sec:local-solving}

At each site~$s$, a \emph{theory solver} computes the local section
$\sigma_s \in \Sem(s)$ by dispatching to the appropriate decision
procedure:

\begin{itemize}[nosep]
  \item \emph{Linear arithmetic}: for comparison guards
    ($x > 0$, $\mathit{len}(a) \geq k$).
  \item \emph{Boolean}: for conjunctions/disjunctions of guards.
  \item \emph{Tensor shapes}: for NumPy/PyTorch dimension constraints.
  \item \emph{Nullability}: for \py{None} checks.
  \item \emph{Collection}: for list/dict invariants.
\end{itemize}

Each theory solver is \emph{sound}: the computed section
overapproximates the concrete states reachable at that site.

\subsection{Bidirectional Fixed-Point Synthesis}\label{sec:synthesis}

\begin{algorithm}[t]
\caption{Backward safe-input synthesis}
\label{alg:backward}
\begin{algorithmic}[1]
\Require Cover $\mathcal{U}$, solved sections $\{\sigma_s\}$
\Ensure Input boundary sections $\{\rho_s \mid s \in \partial_{\mathrm{in}}\}$
\State $W \gets$ error sites with viability predicates
\While{$W \neq \emptyset$}
  \State $(s, \sigma) \gets W.\text{pop}()$
  \If{$s \in \partial_{\mathrm{in}}$}
    \State Record $\rho_s \gets \sigma$
  \Else
    \ForAll{morphisms $f : t \to s$}
      \State $\sigma' \gets \Sem(f)^{\dagger}(\sigma)$
        \Comment{pullback (adjoint of restriction)}
      \If{$t$ is $\phi$-node} merge by $\sqcap$
      \EndIf
      \State $W.\text{push}(t, \sigma')$
    \EndFor
  \EndIf
\EndWhile
\end{algorithmic}
\end{algorithm}

\begin{algorithm}[t]
\caption{Forward output synthesis}
\label{alg:forward}
\begin{algorithmic}[1]
\Require Cover $\mathcal{U}$, input sections $\{\rho_s\}$
\Ensure Output boundary sections $\{\rho_s \mid s \in \partial_{\mathrm{out}}\}$
\State $W \gets$ input boundary sites with $\rho_s$
\While{$W \neq \emptyset$}
  \State $(s, \sigma) \gets W.\text{pop}()$
  \If{$s \in \partial_{\mathrm{out}}$}
    \State Record $\rho_s \gets \sigma$
  \EndIf
  \ForAll{morphisms $f : s \to t$}
    \State $\sigma' \gets \Sem(f)(\sigma)$ \Comment{restriction}
    \If{$t$ is $\phi$-node with multiple incoming}
      \State Accumulate; merge by $\sqcup$ when all arrive
    \Else
      \State $W.\text{push}(t, \sigma')$
    \EndIf
  \EndFor
\EndWhile
\end{algorithmic}
\end{algorithm}

The bidirectional synthesis alternates backward (Algorithm~\ref{alg:backward})
and forward (Algorithm~\ref{alg:forward}) passes until convergence:

\begin{itemize}[nosep]
  \item \textbf{Backward} (error $\to$ input): propagate viability
    requirements via pullback along morphisms.  Produces \emph{safe
    input requirements} $\rho_{\mathrm{in}}$.
  \item \textbf{Forward} (input $\to$ output): push refined sections
    via restriction.  Produces \emph{maximal output guarantees}
    $\rho_{\mathrm{out}}$.
  \item \textbf{Convergence check}: if no new sections are produced
    and the obstruction set stabilizes, stop.  If $\Hc{1} = 0$
    (gluing succeeds), stop early with a global section.
\end{itemize}

\subsection{Interprocedural Section Transport}\label{sec:interprocedural}

For a call $y = g(x)$ in function~$f$:
\begin{enumerate}[nosep]
  \item \textbf{Actual $\to$ Formal}: restrict $f$'s section at the call
    site to $g$'s argument boundary via the actual-to-formal morphism.
  \item \textbf{Analyze $g$}: produce $g$'s function summary (boundary
    sections).
  \item \textbf{Return $\to$ Caller}: transport $g$'s output boundary
    back to $f$'s call-result site.
\end{enumerate}
This is an instance of the \emph{transitivity axiom}: the cover of~$f$
is refined by the covers of its callees.

\section{Three Applications}\label{sec:applications}

\subsection{Bug Finding via $\Hc{1}$}\label{sec:bugs}

\begin{theorem}[Bugs as gluing obstructions]\label{thm:bugs}
A program~$P$ has a bug of class~$\mathcal{B}$ if and only if
$\Hc{1}(\mathcal{U}_P, \Sem_{\mathcal{B}}) \neq 0$, where
$\Sem_{\mathcal{B}}$ is the presheaf encoding the safety property
for bug class~$\mathcal{B}$.
\end{theorem}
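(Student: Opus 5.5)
The plan is to reduce the theorem to Proposition~\ref{prop:sheaf-iff-h1} by building the presheaf $\Sem_{\mathcal{B}}$ so that its sheaf condition on $\mathcal{U}_P$ says exactly that $P$ is free of bugs of class $\mathcal{B}$. First I fix what ``$P$ has a bug of class $\mathcal{B}$'' means: some concrete execution reaches an \textsc{ErrorSite} $e$ of class $\mathcal{B}$ in a state that violates its viability predicate $\phi_e$, as in $\mathit{total}\neq 0$ from \S\ref{sec:overview}. Then I define $\Sem_{\mathcal{B}}$ from $\Sem$ by setting $\Sem_{\mathcal{B}}(e)$ to sections $(\tau,\phi)$ with $\gamma(\phi)\subseteq\gamma(\phi_e)$ at each error site $e$ of class $\mathcal{B}$, and keeping $\Sem$ unchanged elsewhere. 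At every other site $s$, the local section $\sigma_s$ computed by the local solvers of \S\ref{sec:local-solving} is sound, so it overapproximates the reachable states.

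\textbf{Direction $(\Leftarrow)$, by contrapositive.} Assume no bug. Then every reachable state at $e$ satisfies $\phi_e$. The forward synthesis (Algorithm~\ref{alg:forward}), started from the argument-boundary sections, produces a compatible family $(\sigma_i)$ with $\delta^0\sigma=0$ in which each $\sigma_e$ lies in $\Sem_{\mathcal{B}}(e)$. Restriction along data-flow morphisms preserves this property. I then show that every $1$-cocycle is a coboundary. The key point is that a cocycle on an overlap $u_i\times_s u_j$ can only be nontrivial because the admissible sections at an error site are strictly smaller than what flows in. Without a bug there is no such gap, so each local disagreement can be absorbed by adjusting a $0$-cochain, namely by meeting with the restricted predicate. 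Hence $\Hc{1}=0$, equivalently the sheaf condition holds by Proposition~\ref{prop:sheaf-iff-h1}.

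\textbf{Direction $(\Rightarrow)$.} Assume a bug: a reachable state $v$ at $e$ with $\neg\phi_e(v)$. Take the data-flow path from an argument boundary $a$ to $e$. Soundness forces the section at $a$, restricted to the overlap $a\times_s e$, to admit the projection of $v$, while every section of $\Sem_{\mathcal{B}}(e)$ excludes it. This yields the cocycle $\tau_{ae}$, the disagreement on that overlap, exactly as in the $(U_1,U_2)$ example. To show $\tau\notin\im\delta^0$, suppose $\tau=\delta^0\rho$. Then the adjusted sections glue, and by separation of the cover they give a global section whose restriction to $a$ still admits the reachable input, because soundness forbids discarding reachable states. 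Its restriction to $e$, however, excludes $v$. Since $v$ is reachable from that input, this contradicts functoriality $\Sem(g\circ f)=\Sem(f)\circ\Sem(g)$ along the path.

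\textbf{Main obstacle.} The hard step is making ``coboundary'' meaningful. The coboundary operators of Definition~\ref{def:coboundary} use lattice subtraction, so $C^\bullet$ is not an abelian group, and an adjustment $\rho$ could in principle shrink sections below the reachable set. I would try two things. First, pass to the $\GF_2$-linearization the paper uses for $\rk\,\Hc{1}$, with one generator per overlap and value $1$ iff the overlap disagrees. Second, restrict admissible $0$-cochains to sound sections, those containing the reachable states. With that restriction both directions should go through. I also expect the theorem to hold only relative to the cover, meaning that $\mathcal{U}_P$ must contain an overlap linking each error site to the inputs that reach it, which Step~7 of cover construction is meant to ensure.
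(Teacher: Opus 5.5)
Your $(\Rightarrow)$ direction is essentially the paper's own argument. The paper gives no separate proof of Theorem~\ref{thm:bugs}. What supports it is the contrapositive of Theorem~\ref{thm:soundness}: a concrete violating state at $e$ forces a disagreement on the overlap of $e$ with its predecessor $s_{k-1}$, and this disagreement ``cannot be resolved by local adjustment.'' Theorem~\ref{thm:completeness} proves the \emph{same} direction under fine-cover and expressive-lattice hypotheses. Restricting admissible $0$-cochains to sound sections is a reasonable way to make ``cannot be resolved'' precise. With that restriction, however, the contradiction comes directly from soundness at $e$: a sound $\rho_e$ must contain $v$, while $\rho_e\in\Sem_{\mathcal{B}}(e)$ excludes it. It does not come from functoriality, which says nothing about concrete reachability.

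The genuine gap is $(\Leftarrow)$, which the paper never proves and which fails as you argue it.

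First, forward synthesis only over-approximates. The absence of a concrete violation therefore does not give $\gamma(\sigma_e)\subseteq\gamma(\phi_e)$. This is exactly the source of the paper's false positives (69\% precision), i.e.\ bug-free programs with $\Hc{1}\neq 0$.

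Second, the soundness constraint you need for $(\Rightarrow)$ refutes $(\Leftarrow)$ even for exact sections. Use the paper's restriction maps, which are pure key projection $\phi\circ\pi_f$, and take the bug-free program \texttt{def f(x): if x != 0: return 1/x}. Your own requirement that $\mathcal{U}_P$ link $e$ to the inputs reaching it puts the overlap $a\times_s e$ in the cover. The input $x=0$ is reachable at $a$, so every sound $\rho_a$ restricted to that overlap admits $x=0$, while every $\rho_e\in\Sem_{\mathcal{B}}(e)$ excludes it. Your adjustment ``meet with the restricted predicate'' would discard a reachable input at $a$, so it is inadmissible. With sound adjustments, this disagreement is therefore a nonzero class although there is no bug. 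Without them, every disagreement can be shrunk away and $(\Rightarrow)$ collapses.

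What is missing is an exactness hypothesis that the paper's framework does not supply:
\begin{itemize}
\item overlaps must carry path conditions, so that the overlap of $a$ with $e$ consists only of the $a$-states that actually reach $e$;
\item sections must equal the abstraction of the collecting semantics rather than over-approximate it;
\item you must then show that \emph{every} $1$-cocycle is a coboundary, not only the disagreement cochain of one computed family.
\end{itemize}

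The $\GF_2$ linearization does not repair this. The paper's $\partial_0$ is the incidence matrix of the nerve and never looks at the sections. With no triple overlaps, $\rk\,\Hc{1}=m-n+c$, the cycle rank of the nerve ($c$ its number of components), whether or not $P$ has a bug. Moreover, the indicator of a single disagreeing overlap that is a bridge of the nerve is a coboundary, so this route would also break your $(\Rightarrow)$ direction.
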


Each bug class defines a presheaf by specifying viability predicates at
error sites: division-by-zero requires divisor~$\neq 0$, index
out-of-bounds requires $0 \leq i < \mathit{len}(a)$, null dereference
requires receiver~$\neq \texttt{None}$, etc.

\begin{proposition}[$\rk\,\Hc{1}$ = minimum fixes]\label{prop:rank-h1}
Over $\GF_2$, $\rk\,\Hc{1}(\mathcal{U}_P, \Sem)$ equals the minimum
number of independent code changes needed to resolve all obstructions.
\end{proposition}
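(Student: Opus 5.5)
The plan is to make the statement precise over $\GF_2$ and then reduce it to elementary linear algebra. First, linearize the \v{C}ech complex of Definition~\ref{def:cech-cochains}. For each site $u_i$ and overlap $u_i\times_s u_j$, record only the Boolean viability bit: whether the local section meets the viability predicate of $\Sem_{\mathcal{B}}$. This turns $C^0,C^1,C^2$ into $\GF_2$-vector spaces $\GF_2^{I}$, $\GF_2^{\binom{I}{2}}$, $\GF_2^{\binom{I}{3}}$. In characteristic~$2$ the lattice subtraction of Definition~\ref{def:coboundary} becomes symmetric difference, i.e.\ addition, so $\delta^0$ and $\delta^1$ are linear maps. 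They satisfy $\delta^1\delta^0=0$ because each edge term is counted twice. Hence $\Hc{1}=\ker\delta^1/\im\delta^0$ is a finite-dimensional $\GF_2$-space, and $\rk\,\Hc{1}=\dim\ker\delta^1-\rk\,\delta^0$.

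Next, formalize ``code change'' as an edit at a single site: a guard, a default value, or a precondition. Such an edit alters the section at that site and therefore changes the obstruction cocycle $z$ (the disagreement cochain of Theorem~\ref{thm:bugs}) by a vector $e\in\ker\delta^1$. Edits that merely re-adjust a site's section, without changing which overlaps disagree, contribute elements of $\im\delta^0$; these are exactly the local adjustments that Proposition~\ref{prop:sheaf-iff-h1} treats as removable. A set of edits $e_1,\dots,e_k$ resolves all obstructions precisely when the class of $z+\sum e_m$ in $\Hc{1}$ is zero, using Theorem~\ref{thm:bugs} together with Proposition~\ref{prop:sheaf-iff-h1}. ``Independent'' means the classes $[e_m]$ are linearly independent in $\Hc{1}$.

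With this setup the two bounds are as follows. For the upper bound, choose a basis $[c_1],\dots,[c_r]$ of $\Hc{1}$ with $r=\rk\,\Hc{1}$ and let each $c_m$ be realized by one edit. Any obstruction class, in particular every bug class present, is a combination of these, so $r$ independent fixes suffice to kill every class in $\Hc{1}$, which is what ``resolve all obstructions'' means. For the lower bound, suppose fewer than $r$ edits resolved every possible obstruction. Their classes would span a proper subspace $W\subsetneq\Hc{1}$, and any class outside $W$ would survive, a contradiction. Minimality therefore equals $\dim\Hc{1}$, which is just the statement that a minimal spanning set of a vector space is a basis.

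The main obstacle is the realizability step: showing that every cohomology class, in particular every basis vector, is actually induced by some single program edit, and conversely that every edit's effect lies in $\ker\delta^1$. I would first try to establish this for the concrete error-site presheaves, where each edit is a guard inserted at one overlap. Inserting a guard at an overlap toggles exactly that overlap's coordinate, so the elementary cocycles generate $\ker\delta^1$. If full generality fails, I would restate the proposition relative to the cover, in the spirit of Theorem~\ref{thm:completeness}, treating ``code change'' as ``overlap-local guard insertion.''
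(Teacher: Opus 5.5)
Your counting argument is the paper's: $\rk\,\Hc{1}=\dim\ker\delta^1-\rk\,\delta^0$, fewer than $\dim\Hc{1}$ classes span only a proper subspace, and a realized basis suffices. The paper likewise treats every generator of $\Hc{1}$ as an obstruction that must be removed.

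The gap is that this is not the quantity you defined. You say a set of edits resolves the obstructions iff $[z+\sum_m e_m]=0$. That constrains the \emph{single} class $[z]$, whereas both of your bounds concern edits whose classes span all of $\Hc{1}$ (``kill every class'', ``every possible obstruction''). For the problem you actually posed, the lower bound fails. The minimum is the least number of realizable classes summing to $[z]$. It is at most $1$ if, as your upper bound assumes, every class is induced by a single edit. For your overlap-local guards it is the least weight of a cochain in $z+\im\delta^0$.

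Concretely, let the overlap graph be two $3$-cycles sharing one site, with no triple overlaps, so $\dim\Hc{1}=6-4=2$. Let $z$ be the indicator of one overlap in the first cycle. Coboundaries have even weight on each cycle, so $[z]\neq 0$, yet a single guard removes it. Your step ``any class outside $W$ would survive'' is beside the point, since only $[z]$ has to die and it can lie in a one-dimensional $W$.

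Equality needs a different formalization: the program carries several obstruction classes that must each be killed, so the count is the dimension of their span, plus the hypothesis that these classes span $\Hc{1}$. Neither your proposal nor the paper's proof states this. Without it the argument proves only the tautology you yourself noted, that a minimal spanning set is a basis.

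The realizability step also clashes with your linearization. Take one $\GF_2$ coordinate per site and per overlap, with $\delta^0$ the usual coboundary. Then the disagreement $\delta^0\sigma$ of Definition~\ref{def:coboundary} is exact, so if $z$ is that disagreement, $[z]=0$ for every program. Changing the section at a single site $k$ changes $z$ by $\delta^0\mathbf{1}_k\in\im\delta^0$, where $\mathbf{1}_k\in C^0$ is the indicator of $k$. This is exactly the paper's remark that a generator ``cannot be removed by adjusting sections at a single site''. So every single-site edit you call a code change has $[e]=0$ and moves no class.

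Toggling one overlap coordinate gives a cocycle only if that overlap lies in no triple overlap, so ``elementary cocycles generate $\ker\delta^1$'' fails in general.

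Finally, with these constant coefficients $\dim\Hc{1}$ is the first $\GF_2$-Betti number of the nerve of $\mathcal{U}$. It is $0$ if, as written, all $\binom{I}{2}$ pairs and $\binom{I}{3}$ triples occur. It therefore does not depend on which overlaps actually disagree, and cannot by itself count the fixes a particular program needs. A meaningful version requires a coefficient system in which the program's obstruction is a genuinely non-exact cocycle.
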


\begin{proof}
Build the coboundary matrix $\partial_0 : C^0(\GF_2) \to C^1(\GF_2)$
with rows indexed by overlaps and columns by sites:
$(\partial_0)_{(i,j),k} = 1$ iff $k \in \{i,j\}$.
Then $\rk\,\Hc{1} = \dim\ker\delta^1 - \dim\im\delta^0$.
Each independent generator of $\Hc{1}$ is an obstruction that cannot
be removed by adjusting sections at a single site (it is not a
coboundary).  Therefore, resolving all obstructions requires at least
$\rk\,\Hc{1}$ independent modifications.  Conversely, modifying
$\rk\,\Hc{1}$ sites suffices: by linear algebra over~$\GF_2$,
the space of coboundaries has codimension $\rk\,\Hc{1}$ in the
kernel, so $\rk\,\Hc{1}$ generators span all non-trivial
obstructions.
\end{proof}

\subsection{Equivalence via Descent}\label{sec:equiv}

\begin{theorem}[Descent for equivalence]\label{thm:descent}
Let $f, g$ be two programs with semantic presheaves $\Sem_f, \Sem_g$
over a common cover~$\mathcal{U}$.  Then $f$ and $g$ are equivalent
(there exists a natural isomorphism $\eta : \Sem_f \xrightarrow{\sim}
\Sem_g$) if and only if $\Hc{1}(\mathcal{U}, \Iso(\Sem_f, \Sem_g)) = 0$.
\end{theorem}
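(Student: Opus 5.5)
The plan is to read the statement as a standard descent (torsor) argument for the presheaf of local isomorphisms $\Iso \defeq \Iso(\Sem_f,\Sem_g)$. On each cover element $u_i$, $\Iso(u_i)$ is the set of lattice isomorphisms $\eta_i : \Sem_f(u_i) \to \Sem_g(u_i)$ that commute with restriction along morphisms into $u_i$. Local equivalence of $f$ and $g$ at $u_i$ (the per-site SMT check) means exactly that $\Iso(u_i) \neq \emptyset$. I will make the implicit standing hypothesis explicit: the local isomorphisms $\eta_i$ exist for every $u_i$. Without it, the right-hand side could vanish vacuously while $f \not\simeq g$. With this hypothesis the statement becomes the usual one: an $\Iso$-torsor over $\mathcal{U}$ is trivial iff its \v{C}ech class vanishes. $\Iso$ is only a sheaf of sets with a group action, not of abelian groups. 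I will therefore interpret $\Hc{1}(\mathcal{U},\Iso)$ as nonabelian \v{C}ech cohomology: $1$-cocycles are families $g_{ij} \in \mathrm{Aut}(\Sem_f|_{u_i\times_s u_j})$ with $g_{ik} = g_{jk}\circ g_{ij}$ on triple overlaps, modulo $g_{ij} \sim h_j\, g_{ij}\, h_i^{-1}$. This matches Definition~\ref{def:coboundary} once the lattice ``subtraction'' is read as the difference $\eta_j^{-1}\circ\eta_i$.

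\textbf{Step 1 (from local isomorphisms to a cocycle).} Fix local isomorphisms $\eta_i$. On each overlap, define $g_{ij} \defeq \Sem(\mathrm{pr}_2)(\eta_j)^{-1} \circ \Sem(\mathrm{pr}_1)(\eta_i)$. The cocycle identity on triple overlaps follows by telescoping, using functoriality $\Sem(g\circ f)=\Sem(f)\circ\Sem(g)$. If we choose different $\eta_i' = \eta_i \circ h_i$, the cocycle changes by a coboundary. Hence the class $[g] \in \Hc{1}$ depends only on the pair $(f,g)$.

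\textbf{Step 2 ($\Rightarrow$).} Suppose a global natural isomorphism $\eta$ exists. Take $\eta_i \defeq \eta|_{u_i}$. Then all $g_{ij}$ are identities, so the class $[g]$ vanishes. Every other choice of local isomorphisms differs from these by a coboundary, so the obstruction is trivial. Here, as in Proposition~\ref{prop:sheaf-iff-h1}, I read ``$\Hc{1}=0$'' as ``the class of the torsor defined by $(f,g)$ is trivial.''

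\textbf{Step 3 ($\Leftarrow$).} Suppose the class vanishes, so $g_{ij} = h_j^{-1}\circ h_i$ on overlaps for some $h_i$. Replace $\eta_i$ by $\eta_i' \defeq \eta_i\circ h_i^{-1}$. The family $(\eta_i')$ now agrees on overlaps, so it is a compatible family in $\Hc{0}(\mathcal{U},\Iso)$. Invoke the gluing half of the local determination property (Definition~\ref{def:cover}) for the Hom-presheaf. Equivalently, apply Proposition~\ref{prop:sheaf-iff-h1} to $\Iso$. This yields a unique global $\eta$ restricting to $\eta_i'$. Naturality of $\eta$ holds because naturality squares can be checked locally, using separation (injectivity of $\Sem_g(s)\to\prod_i\Sem_g(u_i)$). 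Invertibility holds because the inverses $\eta_i'^{-1}$ glue likewise, and the two composites restrict to identities, hence equal the identity by separation.

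\textbf{Main obstacle.} The delicate step is Step 3's appeal to gluing. Definition~\ref{def:cover} asserts the sheaf property for $\Sem$, not for $\Iso(\Sem_f,\Sem_g)$, so I must show that the internal Hom of two presheaves satisfying local determination again satisfies it. I would first try a component-wise argument: a morphism $\Sem_f(s)\to\Sem_g(s)$ is determined by its composites to $\prod_i\Sem_g(u_i)$, which are the restricted $\eta_i'$. I would then define $\eta_s(x)$ as the glued section of $(\eta_i'(x|_{u_i}))_i$, whose compatibility on overlaps is exactly the agreement established above. Monotonicity of $\eta_s$ in the lattice order needs a short extra check, which relies on the restriction maps jointly reflecting order. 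That is plausible for refinement lattices, but I would state it as an assumption on covers if it does not follow directly.
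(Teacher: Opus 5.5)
Your proposal is correct and follows the same route as the paper's proof sketch. In the forward direction, a global $\eta$ restricts to local isomorphisms with identity transition functions. In the reverse direction, a trivializing coboundary straightens the local isomorphisms into a compatible family, which glues to a global isomorphism. Your $g_{ij}=\eta_j^{-1}\circ\eta_i$ in $\mathrm{Aut}(\Sem_f)$ versus the paper's $\eta_j\circ\eta_i^{-1}$ in $\mathrm{Aut}(\Sem_g)$ is only a convention. The hypotheses you make explicit are exactly the ones the paper's sketch uses tacitly when it simply invokes ``the sheaf condition for $\Iso$'': local isomorphisms exist on every $u_i$, $\Hc{1}=0$ means this particular class vanishes, and $\Iso$ has the sheaf property, including the order-reflection you flag for monotonicity.
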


\begin{proof}[Proof sketch]
$(\Leftarrow)$: The local isomorphisms $\{\eta_i : \Sem_f(u_i)
\xrightarrow{\sim} \Sem_g(u_i)\}$ define transition functions
$g_{ij} = \eta_j|_{u_{ij}} \circ \eta_i^{-1}|_{u_{ij}}$.
$\Hc{1} = 0$ means all $g_{ij}$ satisfy the cocycle condition and are
coboundaries, so they can be ``straightened'' to identities.  The
adjusted local isomorphisms agree on overlaps and glue (by the sheaf
condition for~$\Iso$) to a global isomorphism.

$(\Rightarrow)$: A global isomorphism $\eta$ restricts to local
isomorphisms with trivial transition functions ($g_{ij} = \mathrm{id}$),
so the $1$-cocycle is zero.

Full proof mechanized in Lean: \texttt{DeppyProofs.Descent.descent\_theorem}.
\end{proof}

\subsection{Specification Verification via Product Covers}%
\label{sec:spec}

Given a specification with $k$ paths and $m$ conjuncts, the
\emph{product cover} $\mathcal{U} = \{(p_i, c_j)\}_{i \leq k, j \leq m}$
creates one site per path-conjunct pair.  Each verification condition
(VC) is small: one path, one conjunct.

\begin{proposition}[VC reduction]\label{prop:vc-reduction}
Let $n_{\mathrm{overlap}}$ be the number of non-trivial overlaps in
the product cover.  Then $\Hc{1}$-based elimination reduces the
number of VCs from $O(km)$ to $O(n_{\mathrm{overlap}})$.
Proof transfer across overlaps further reduces the count:
if $(p_1, c_j)$ is proved and $p_1 \cap p_2 \neq \emptyset$,
then $(p_2, c_j)$ is discharged.
\end{proposition}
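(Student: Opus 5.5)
I will prove Proposition~\ref{prop:vc-reduction} by splitting the product cover's VCs into two groups. The \emph{diagonal} VCs sit at the sites $(p_i,c_j)$. The \emph{overlap} VCs are the restriction checks on pairs $(p_i,c_j)\times_s(p_{i'},c_{j'})$ that are non-trivial.

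The intended invariant is this: once the obligations on the non-trivial overlaps are discharged and $\Hc{1}=0$ on the remaining nerve, every diagonal VC is either checked directly or transported. The plan has four steps.

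\begin{enumerate}[nosep]
\item Set up the nerve of the product cover. Its vertices are the $km$ sites. Its edges are the $n_{\mathrm{overlap}}$ non-trivial overlaps, where a trivial overlap is one whose fibre product gives the $\top$ section and so imposes no constraint. Over $\GF_2$, the coboundary $\partial_0$ has the same shape as in the proof of Proposition~\ref{prop:rank-h1}.
\item Apply rank--nullity to this graph. The number of independent obligations that must actually be checked is $\dim\im\delta^0 + \rk\,\Hc{1}$.
\item Bound the first term. We have $\dim\im\delta^0 \le n_{\mathrm{overlap}}$, since $\im\delta^0 \subseteq C^1$, which has one coordinate per non-trivial overlap.
\item Bound the second term. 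We have $\rk\,\Hc{1} \le \dim C^1 = n_{\mathrm{overlap}}$ as well.
\end{enumerate}

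Together, steps 3 and 4 give the $O(n_{\mathrm{overlap}})$ count. The claim also requires that sites lying on no non-trivial overlap contribute $O(1)$ work each, charged to the overlap edges. I will handle isolated vertices by noting that their section only needs to be checked when they carry a viability predicate. That check can be charged to a single representative per conjunct $c_j$ via the proof-transfer step below.

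For the proof-transfer clause, I use the restriction maps directly. Suppose $(p_1,c_j)$ is proved, so its section $\sigma$ satisfies $c_j$ on $\gamma(\sigma)$, and suppose $p_1\cap p_2\neq\emptyset$. The overlap morphisms $(p_1,c_j)\leftarrow (p_1,c_j)\times_s(p_2,c_j)\to(p_2,c_j)$ then carry $\sigma$ to the overlap. The gluing half of the local determination property (Definition~\ref{def:cover}) lets me lift the overlap section to $(p_2,c_j)$. Monotonicity of $\Sem(f)$ preserves the verified conjunct. Iterating this along connected components of the path-intersection graph for each fixed $c_j$ discharges every $(p_i,c_j)$ in the component from one proof.

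The main obstacle is the transfer step itself. Taken literally, nonempty intersection of paths does not imply that a conjunct proved on $p_1$ holds on all of $p_2$. It only holds on $p_1\cap p_2$. To make the step go through, I will first try to strengthen the hypothesis so that the overlap covers $(p_2,c_j)$, meaning the overlap morphism is itself a covering family. Transitivity in Definition~\ref{def:grothendieck} then lifts the proof. Failing that, I will read ``discharged'' as reducing $(p_2,c_j)$ to the residual VC on $p_2\setminus p_1$. That residual is again one of the overlap obligations already counted in $n_{\mathrm{overlap}}$, so the asymptotic bound is unaffected.
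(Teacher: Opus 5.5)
The paper states this proposition without proof, so there is no argument of its own to compare against. Judged on its own terms, your plan has a genuine gap in each half, and the statement as written is not provable in general.

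\textbf{The counting half.} Steps~2--4 bound the wrong object. The obligations you count there live in $C^1$, and $\dim\im\delta^0+\rk\,\Hc{1}=\dim\ker\delta^1\le\dim C^1=n_{\mathrm{overlap}}$ holds for any cover. So the cohomological bookkeeping only restates that there are at most $n_{\mathrm{overlap}}$ agreement checks. You also never justify that these are ``the obligations that must actually be checked.''

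The VCs the proposition is about are the site obligations $(p_i,c_j)$, which live in $C^0$. Even if every non-trivial overlap permitted proof transfer, each connected component of the nerve would still need at least one directly proved VC. By rank--nullity for $\partial_0$ over $\GF_2$, the number of components is $\dim\ker\partial_0=km-\rk\,\partial_0\ge km-n_{\mathrm{overlap}}$. So no $O(n_{\mathrm{overlap}})$ bound is possible when overlaps are sparse. For example, take $m=1$ and $k$ paths with mutually exclusive path conditions (the arms of a $k$-way if/elif). Then $n_{\mathrm{overlap}}=0$, yet each of the $k$ VCs can fail independently of the others.

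Your device of charging isolated sites ``to a single representative per conjunct via proof transfer'' is circular. An isolated site has, by definition, no non-trivial overlap to transfer along. Even if it worked, it would contribute $O(m)$ work, not $O(n_{\mathrm{overlap}})$.

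\textbf{The transfer half.} You correctly notice that the clause is false as literally stated: a conjunct proved on $p_1$ is only known on $p_1\cap p_2$. Neither repair closes the gap.

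The gluing half of Definition~\ref{def:cover} does not apply. A single overlap $(p_1,c_j)\times_s(p_2,c_j)\to(p_2,c_j)$ is not a covering family of $(p_2,c_j)$ unless it already exhausts $p_2$. Monotone restriction carries information from a site down to the overlap, never back up.

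\begin{itemize}
\item Repair~(a) proves a different proposition. It replaces $p_1\cap p_2\neq\emptyset$ by what amounts to $p_2\subseteq p_1$, and that hypothesis never holds for distinct paths with mutually exclusive guards.
\item Repair~(b) fails because the residual obligation on $p_2\setminus p_1$ is not a fibre product of cover elements. It is therefore not one of the $n_{\mathrm{overlap}}$ overlap checks. It is a fresh site-level VC, essentially as hard as $(p_2,c_j)$ itself, so the transfer saves nothing and the asymptotic count does change.
\end{itemize}

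The most you can honestly prove is a conditional, weaker statement. Assume a transfer-enabling hypothesis on same-conjunct edges, for instance that the overlap covers the target site. Then the number of VCs that must be checked directly is governed by the component structure of the resulting transfer graph. That number is at least $km-n_{\mathrm{overlap}}$ and at most $km$, not $O(n_{\mathrm{overlap}})$.
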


\section{Metatheory}\label{sec:metatheory}

\subsection{Soundness}\label{sec:soundness}

We define the concrete semantics $\llbracket P \rrbracket$ as the
collecting semantics: for each site~$s$, the set of concrete states
reachable at~$s$ during any execution.  The abstraction function
$\alpha : \wp(\mathrm{State}) \to \Sem(s)$ maps a set of states to
the tightest section describing them.  The concretization
$\gamma : \Sem(s) \to \wp(\mathrm{State})$ maps a section to the
states it admits.  The pair $(\alpha, \gamma)$ forms a Galois
connection: $\alpha(S) \sqsubseteq \sigma \iff S \subseteq \gamma(\sigma)$.

\begin{theorem}[Soundness]\label{thm:soundness}
If $\Hc{1}(\mathcal{U}_P, \Sem) = 0$, then for every error site~$e$
with viability predicate~$\psi_e$, every concrete state reachable
at~$e$ satisfies~$\psi_e$.  That is: the program has no bugs of the
analyzed class.
\end{theorem}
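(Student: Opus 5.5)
The plan is to combine the sheaf condition (Proposition~\ref{prop:sheaf-iff-h1}) with the soundness of the local theory solvers and the Galois connection $(\alpha,\gamma)$. We reduce ``no bugs'' to the existence of a global section whose restriction to each error site~$e$ lies below, i.e.\ implies, the viability predicate~$\psi_e$. Concretely, we prove three things in order: (i) every concrete state reachable at a site $s$ lies in $\gamma(\sigma_s)$ for the solved section $\sigma_s$; (ii) $\Hc{1}(\mathcal{U}_P,\Sem)=0$ forces the solved family $(\sigma_s)$, together with the viability sections at error sites, to be compatible and hence to glue; (iii) compatibility at error sites means $\sigma_e \sqsupseteq \psi_e$, so $\gamma(\sigma_e)\subseteq \gamma(\psi_e)$, and by (i) every reachable state satisfies~$\psi_e$.

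Step (i) is an induction over the data-flow morphisms of $\mathbf{S}_P$. The base cases are the argument boundaries, where the section is the declared or synthesized input requirement $\rho_s$. At each site, soundness of the local solver (\S\ref{sec:local-solving}) gives $\alpha(\llbracket P\rrbracket_s)\sqsubseteq\sigma_s$. Restriction along $f:s\to t$ is monotone, and it commutes with concretization up to the projection $\pi_f$, since $\gamma(\phi\circ\pi_f)=\pi_f^{-1}\gamma(\phi)$. This preserves the overapproximation, and $\sqcup$ at $\phi$-nodes preserves it as well. The Galois connection then turns $\alpha(S)\sqsubseteq\sigma$ into $S\subseteq\gamma(\sigma)$.

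Step (ii) uses the construction of $\Sem_{\mathcal{B}}$ from Theorem~\ref{thm:bugs}. Each error site carries the viability section $\psi_e$, and on the overlap $u_e\times_s u_t$ with a neighbouring site $t$ the coboundary $(\delta^0\sigma)_{et}$ measures $\Sem(\mathrm{pr})(\sigma_t)-\psi_e$. This is the set of states admitted by the propagated section but excluded by viability. The disagreement family is a $1$-cocycle: $\delta^1\delta^0=0$ follows from functoriality of $\Sem$. Since $\Hc{1}=0$, it is a coboundary. We then argue that the only available adjustments are \emph{refinements} along the forward/backward synthesis. By Algorithms~\ref{alg:backward}--\ref{alg:forward}, these adjustments are exactly the pullbacks of $\psi_e$ to the input boundary. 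So the adjusted family is compatible with $\psi_e$ imposed, and by the sheaf condition it glues to a global section whose restriction to $e$ refines $\psi_e$.

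The main obstacle is step (ii). The coboundary adjustment must be shown not to weaken the facts about reachable states. Otherwise the ``straightening'' could move $\sigma_e$ upward without changing the concrete behaviour, which would break soundness. I would handle this by restricting adjustments to the input boundary sections $\rho_s$, which are the program's preconditions, and then rerunning step (i) with the adjusted $\rho_s$. The adjusted sections are then still overapproximations of the reachable states under those preconditions. I would also verify that lattice subtraction in $\delta^0$ vanishes exactly when $\Sem(\mathrm{pr})(\sigma_t)\sqsupseteq\psi_e$. If this fails in general, I would fall back to stating the result relative to the cover, assuming the local determination property of Definition~\ref{def:cover}.
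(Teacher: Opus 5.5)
Your steps (i) and (iii) match the two ends of the paper's argument: local solving overapproximates $\llbracket P\rrbracket$, and the viability section at $e$ excludes violating states. Step (ii), which you rightly flag, fails, and your proposed repair changes the theorem.

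The $1$-cochain you build, $\Sem(\mathrm{pr})(\sigma_t)-\psi_e$ on each overlap, is by construction $\delta^0$ of the $0$-cochain that carries $\psi_e$ at error sites and the solved sections elsewhere. So it lies in $\im\delta^0$ whether or not $\Hc{1}$ vanishes, and ``since $\Hc{1}=0$ it is a coboundary'' gives you nothing. Whether a \emph{particular} family is compatible is a question about $\ker\delta^0$, and $\Hc{1}=0$ cannot decide it. For example, with the $\GF_2$ coboundary matrix of Proposition~\ref{prop:rank-h1}, a cover with two sites and one overlap has $\delta^0$ surjective, hence $\Hc{1}=0$, even when that overlap records a disagreement.

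Restricting adjustments to the input sections $\rho_s$ does not rescue this. Once you strengthen $\rho_s$ by backward pullbacks of $\psi_e$ and rerun (i), your sections overapproximate only the states reachable from inputs satisfying the \emph{adjusted} preconditions. The theorem, however, quantifies over all of $\llbracket P\rrbracket$. ``No violation under the synthesized safe-input precondition'' holds for every program. In the overview, pulling $\mathit{total}\neq 0$ back to the entry makes the overlaps agree, yet \texttt{process([0])} still divides by zero. Falling back on the local determination property of Definition~\ref{def:cover} does not help either, since it only concerns gluing families already known to be compatible.

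The paper never adjusts sections. It argues by contradiction from a concrete violating state $c$ that reaches $e$ along $s_1,\dots,s_k=e$. It places the disagreement on the last overlap $(s_{k-1},e)$, where the forward section at $s_{k-1}$ contains $c$ and $\psi_e$ excludes it. The fact that $c$ is a genuine state of $\llbracket P\rrbracket$ is what is meant to make the class nonzero. Keeping the collecting semantics fixed avoids your precondition problem. However, the paper's key step (``a witnessed disagreement is not a coboundary'') is asserted rather than derived from Definitions~\ref{def:coboundary}--\ref{def:cech-groups}, and it is exactly the link your step (ii) lacks.

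What the last-overlap step actually needs is a hypothesis about a specific family: sections that overapproximate $\llbracket P\rrbracket$ at non-error sites, equal $\psi_e$ at error sites, and agree on every overlap. That is an inductive invariant, a condition on compatible families (of $\Hc{0}$ type), not the vanishing of $\Hc{1}$.
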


\begin{proof}
Suppose for contradiction that some concrete state $c$ reaches error
site~$e$ and violates $\psi_e$.  Since the analysis is sound
(each local section overapproximates the reachable states),
$c \in \gamma(\sigma_e)$.  But $\sigma_e$ was computed to satisfy the
viability predicate: $\gamma(\sigma_e) \subseteq \psi_e$.

Now, $c$ reached~$e$ via a path through sites $s_1, \ldots, s_k = e$.
At each site, the section overapproximates~$c$.  At the overlap between
$s_{k-1}$ and~$e$, the section at~$s_{k-1}$ includes~$c$ (which
violates~$\psi_e$) while the viability predicate at~$e$ excludes it.
Therefore, the restrictions of $\sigma_{s_{k-1}}$ and $\sigma_e$ to
the overlap disagree: there is a nonzero $1$-cocycle.  Since this
cocycle cannot be resolved by local adjustment (the concrete state~$c$
witnesses the genuine disagreement), it represents a nonzero class
in $\Hc{1}$, contradicting $\Hc{1} = 0$.

Formalized in Lean: \texttt{DeppyProofs.Soundness.soundness}.
\end{proof}

\subsection{Completeness Relative to the Cover}\label{sec:completeness}

\begin{theorem}[Relative completeness]\label{thm:completeness}
If a concrete bug exists (a reachable state at an error site violates
the viability predicate) and the cover $\mathcal{U}$ is
\emph{fine enough} (every concrete execution path passes through at
least one cover site), and the refinement lattice~$\mathcal{R}$ is
\emph{expressive enough} (every viability predicate violation is
expressible as a lattice element), then $\Hc{1} \neq 0$.
\end{theorem}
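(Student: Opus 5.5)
We prove the contrapositive in the style of Theorem~\ref{thm:soundness}: from a concrete witness we build an explicit $1$-cocycle and show it cannot be a coboundary. Fix a concrete state $c$ that reaches an error site $e$ along an execution path and violates $\psi_e$. By the fineness hypothesis, this path passes through some cover site $u_i \to s$. Follow the data-flow morphisms along the path to the cover site $u_j$ whose restriction contains $e$; if $e$ is itself in $\mathcal{U}$, take $u_j = e$. By soundness of the local solvers (\S\ref{sec:local-solving}), $c \in \gamma(\sigma_{u_i})$. Because the Galois connection is compatible with restriction, the restriction of $c$ along the projections lies in $\gamma(\Sem(\mathrm{pr}_1)(\sigma_{u_i}))$ on the overlap $u_i \times_s u_j$.

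Next we use expressiveness to produce a lattice element separating the two restrictions. Since $\mathcal{R}$ can express the violation, there is $\chi \in \mathcal{R}$ with $c \in \gamma(\chi)$ and $\gamma(\chi) \cap \gamma(\psi_e) = \emptyset$. The section at the error site satisfies $\gamma(\sigma_{u_j}) \subseteq \psi_e$ after meeting with viability. Hence $\Sem(\mathrm{pr}_1)(\sigma_{u_i}) \sqcap \chi \neq \bot$, whereas $\Sem(\mathrm{pr}_2)(\sigma_{u_j}) \sqcap \chi = \bot$. Therefore $(\delta^0\sigma)_{ij} \neq 0$, witnessed by $\chi$. We then define the cochain $\tau \in C^1$ by $\tau_{ij} = (\delta^0\sigma)_{ij}$ on this overlap. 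We check $\delta^1\tau = 0$ on triples by the usual telescoping of Definition~\ref{def:coboundary}. This means $\tau$ is a cocycle.

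The main obstacle is showing that $\tau \notin \im\delta^0$. The coboundary $\delta^0\sigma$ is trivially in the image, so a bare disagreement of sections cannot suffice. The nontriviality must come from requiring adjustments $\sigma'$ to remain \emph{sound and viable}. My plan is to restrict $C^0$ to the admissible cochains: those $\sigma'$ with $\gamma(\sigma'_u)$ containing the reachable states at $u$ and satisfying viability at error sites. Any admissible $\sigma'$ must then keep $c$ in $\gamma(\sigma'_{u_i})$ and exclude it from $\gamma(\sigma'_{u_j})$. Therefore $\tau$ is not the coboundary of an admissible cochain, and $\Hc{1} \neq 0$. Proposition~\ref{prop:sheaf-iff-h1} gives a cross-check: a zero $\Hc{1}$ would glue the admissible family to a global section $\sigma \in \Sem(s)$ that both contains and excludes $c$. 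By separation of the cover, no such section exists.

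The step I expect to need the most care is the formal convention that $\im\delta^0$ ranges over admissible adjustments. I would state it explicitly as the intended reading of Definition~\ref{def:cech-groups}, matching how the soundness proof treats ``local adjustment.'' Granting it, the remaining steps are bookkeeping.
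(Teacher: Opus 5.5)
Your first half matches the paper's proof: the witness $c$, fineness to reach a cover site next to $e$, and expressiveness to produce $\chi$ separating the two restrictions on $u_i\times_s u_j$. The paper then simply asserts that this disagreement is ``a nonzero $1$-cocycle that is not a coboundary.'' You are right that this assertion is the entire content of the theorem, and that under Definition~\ref{def:cech-groups} it fails as written, because $\delta^0\sigma\in\im\delta^0$ by construction. Your cocycle check is also unjustified. If $\tau$ is supported only on the one overlap, then $(\delta^1\tau)_{ijk}=\pm\tau_{ij}\neq 0$ for every triple containing $i,j$. And telescoping needs a group structure that the lattice ``subtraction'' of Definition~\ref{def:coboundary} does not provide.

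Your admissibility patch does not close the gap. Under the theorem's own hypothesis there are \emph{no} admissible $0$-cochains. At the site carrying $\psi_e$ (your $u_j$), soundness forces $c\in\gamma(\sigma'_{u_j})$ because $c$ is reachable there, while viability forces $\gamma(\sigma'_{u_j})\subseteq\psi_e$, and $c\notin\psi_e$. If instead $u_j$ carries no viability constraint, nothing forces $c\notin\gamma(\sigma'_{u_j})$, and your exclusion step fails. In the first case, ``$\tau$ is not the coboundary of an admissible cochain'' is vacuous: it holds equally for $\tau=0$, so it does not separate $[\tau]$ from $0$. Moreover, $\im\delta^0$ over an empty domain is not a subgroup, so the quotient defining $\Hc{1}$ is not even defined.

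What you have detected is a local inconsistency at the single site $e$, not a gluing obstruction. The overlap plays no role in your non-exactness step. The cross-check via Proposition~\ref{prop:sheaf-iff-h1} has no admissible family to glue, and that proposition concerns the unrestricted complex anyway.

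A real proof needs an honest abelian complex, e.g.\ the $\GF_2$ agree/disagree encoding behind Proposition~\ref{prop:rank-h1}, plus a reason the disagreement cochain is not exact. There the stated hypotheses are insufficient. If $e$ overlaps only $s$, the indicator of the overlap $(s,e)$ equals $\partial_0$ of the indicator of $\{e\}$, so it is a coboundary. More generally the same holds, using $e$'s side, whenever $(s,e)$ is a bridge of the nerve. Nonvanishing needs an extra hypothesis, such as the disagreeing overlap lying on a cycle of the nerve and in no triple overlap, or else a different coefficient system. A convention about which cochains count as adjustments cannot supply it.
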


\begin{proof}
By the bug hypothesis, there exists a state $c$ at error site~$e$
violating $\psi_e$.  By ``fine enough,'' $c$ is abstracted at some site
$s$ adjacent to~$e$.  By ``expressive enough,'' the section at~$s$
distinguishes $c$ from safe states.  Therefore, the overlap $(s, e)$ has
a genuine disagreement (nonzero $1$-cocycle) that is not a coboundary.
\end{proof}

\subsection{Mayer-Vietoris}\label{sec:mayer-vietoris}

\begin{theorem}[Mayer-Vietoris for programs]\label{thm:mayer-vietoris}
Decompose the cover $\mathcal{U}$ at a branch into sub-covers
$\mathcal{A}$ and $\mathcal{B}$ with overlap
$\mathcal{A} \cap \mathcal{B}$.  There is an exact sequence:
\[
  0 \to \Hc{0}(\mathcal{U}) \to
  \Hc{0}(\mathcal{A}) \oplus \Hc{0}(\mathcal{B}) \to
  \Hc{0}(\mathcal{A} \cap \mathcal{B})
  \xrightarrow{\delta}
  \Hc{1}(\mathcal{U}) \to
  \Hc{1}(\mathcal{A}) \oplus \Hc{1}(\mathcal{B}) \to
  \Hc{1}(\mathcal{A} \cap \mathcal{B})
\]
When the cover has no non-trivial triple overlaps
($\Hc{2}(\mathcal{U}) = 0$, which holds when each pair of sites
has at most one overlap), the rank formula simplifies to:
\[
  \rk\,\Hc{1}(\mathcal{U}) = \rk\,\Hc{1}(\mathcal{A})
    + \rk\,\Hc{1}(\mathcal{B})
    - \rk\,\Hc{1}(\mathcal{A} \cap \mathcal{B})
    + \rk\,\im(\delta)
\]
\end{theorem}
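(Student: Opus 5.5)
The plan is to obtain the exact sequence as the long exact cohomology sequence of a short exact sequence of \v{C}ech cochain complexes, exactly as in the topological Mayer--Vietoris argument. To have honest abelian groups, we first pass to the $\GF_2$-linearized complexes used in the proof of Proposition~\ref{prop:rank-h1}. There, $C^p$ is the $\GF_2$-vector space on $p$-fold overlaps and $\delta$ is the incidence matrix. This is the setting in which ``rank'' in the statement makes sense.

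\textbf{Step 1: the short exact sequence.} Index $\mathcal{U}=\mathcal{A}\cup\mathcal{B}$ and let $\mathcal{A}\cap\mathcal{B}$ denote the common sites and their overlaps. For each degree $p$ I will write down
\[
  0 \to C^p(\mathcal{U}) \xrightarrow{\ (r_{\mathcal{A}},\, r_{\mathcal{B}})\ }
  C^p(\mathcal{A}) \oplus C^p(\mathcal{B})
  \xrightarrow{\ r - r'\ } C^p(\mathcal{A}\cap\mathcal{B}) \to 0 ,
\]
where the maps are restrictions of cochains to sub-indexing sets. Surjectivity on the right is clear: extend a cochain on $\mathcal{A}\cap\mathcal{B}$ by zero into $\mathcal{A}$. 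Exactness in the middle is the statement that two cochains agreeing on common simplices patch to one cochain on their union. Both maps commute with $\delta$ because $\delta$ is defined simplexwise (Definition~\ref{def:coboundary}), so they are chain maps.

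\textbf{Step 2: the long exact sequence.} Applying the snake lemma gives the connecting map $\delta:\Hc{0}(\mathcal{A}\cap\mathcal{B})\to\Hc{1}(\mathcal{U})$ and the displayed six-term sequence. Concretely, $\delta$ lifts a compatible family on the overlap to a $0$-cochain on $\mathcal{A}$, applies $\delta^0$, and reads off the resulting $1$-cocycle on $\mathcal{U}$. I will check directly that this is well defined, since that is the standard diagram chase. Exactness at $\Hc{0}(\mathcal{U})$ on the left follows from injectivity of $C^0(\mathcal{U})\to C^0(\mathcal{A})\oplus C^0(\mathcal{B})$. This matches the separation half of Definition~\ref{def:cover} and the sheaf reading of Proposition~\ref{prop:sheaf-iff-h1}.

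\textbf{Step 3: the rank formula.} This is the delicate step. The long sequence continues
\[
  \Hc{1}(\mathcal{A}\cap\mathcal{B})\to\Hc{2}(\mathcal{U})\to\cdots,
\]
and the hypothesis of no triple overlaps kills $C^2(\mathcal{U})$, hence $\Hc{2}(\mathcal{U})=0$. The map $\Hc{1}(\mathcal{A})\oplus\Hc{1}(\mathcal{B})\to\Hc{1}(\mathcal{A}\cap\mathcal{B})$ is therefore surjective. Rank--nullity along the tail $\Hc{0}(\mathcal{A}\cap\mathcal{B})\xrightarrow{\delta}\Hc{1}(\mathcal{U})\to\Hc{1}(\mathcal{A})\oplus\Hc{1}(\mathcal{B})\to\Hc{1}(\mathcal{A}\cap\mathcal{B})\to 0$ then gives
\[
  \rk\Hc{1}(\mathcal{U})=\rk\im\delta+\rk\Hc{1}(\mathcal{A})+\rk\Hc{1}(\mathcal{B})-\rk\Hc{1}(\mathcal{A}\cap\mathcal{B}),
\]
which is the claimed formula.

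\textbf{Main obstacle.} The hard part is not the homological algebra but justifying the reduction to $\GF_2$-vector spaces. Definition~\ref{def:coboundary} uses lattice ``subtraction'', which is not a group operation, so the snake lemma does not apply to $\Sem$ verbatim. I would first try to define the linearized complex as the one whose cochains record the support of disagreement on each overlap, as in Proposition~\ref{prop:rank-h1}, and prove that restriction respects this encoding. If that fails, I would state the theorem for that $\GF_2$-complex explicitly and treat the lattice version only as motivation.
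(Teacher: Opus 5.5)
Your outline follows the paper's own sketch: pass to the $\GF_2$ incidence complex, write a short exact sequence of \v{C}ech complexes, take the long exact sequence, then use $\Hc{2}(\mathcal{U})=0$ and rank--nullity. Steps~2--3 would be correct given a genuine short exact sequence. The gap is in Step~1: you never check that $C^p(\mathcal{U})\to C^p(\mathcal{A})\oplus C^p(\mathcal{B})$ is injective for $p\ge 1$, and in general it is not.

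Restriction only sees overlaps lying inside $\mathcal{A}$ or inside $\mathcal{B}$. An overlap between a site of $\mathcal{A}\setminus\mathcal{B}$ and a site of $\mathcal{B}\setminus\mathcal{A}$ is a simplex of the nerve of $\mathcal{U}$ lying in neither sub-nerve, so its indicator cochain restricts to $0$. Your phrase \emph{patch to one cochain on their union} silently identifies the nerve of $\mathcal{U}$ with the union of the two sub-nerves. Without injectivity in degree~$1$, the step of your connecting map that reads off a $1$-cocycle on $\mathcal{U}$ from $(\delta^0 y_{\mathcal{A}},0)$ has no unique answer, and exactness at $\Hc{1}(\mathcal{U})$ breaks.

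Concretely, take sites $a,c,b$ with $\mathcal{A}=\{a,c\}$ and $\mathcal{B}=\{c,b\}$, all three pairwise overlaps nontrivial and no triple overlap. In the incidence model of Proposition~\ref{prop:rank-h1}, $\Hc{1}(\mathcal{U})$ is the cycle space of a triangle, of rank~$1$, whereas $\Hc{1}(\mathcal{A})=\Hc{1}(\mathcal{B})=\Hc{1}(\mathcal{A}\cap\mathcal{B})=0$. Moreover, $\Hc{0}(\mathcal{A})\oplus\Hc{0}(\mathcal{B})\to\Hc{0}(\mathcal{A}\cap\mathcal{B})\cong\GF_2$ is already surjective. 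So exactness would force $\delta=0$, and then $\Hc{1}(\mathcal{U})\neq 0$ would have to inject into $0$. The rank formula predicts $0$, not $1$.

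What you need is the hypothesis the paper's sketch invokes by calling $\mathcal{A},\mathcal{B}$ an \emph{excisive pair}, stated in a form that actually does the job. The condition is that no nontrivial overlap joins a site of $\mathcal{A}\setminus\mathcal{B}$ to a site of $\mathcal{B}\setminus\mathcal{A}$. Then every overlap (hence every triple overlap) of $\mathcal{U}$ lies inside $\mathcal{A}$ or inside $\mathcal{B}$. This is what \emph{decompose at a branch} must be taken to mean: the two arms interact only through the shared guard and join sites. The paper's parenthetical gloss (the union is $\mathcal{U}$ and the intersection is a sub-cover of both) does not imply this, so state the nerve condition explicitly.

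Under this condition, $C^p(\mathcal{U})$ is exactly the cochains on the union of the two sub-nerves, and your sequence is short exact in every degree. Steps~2 and~3 then go through verbatim. Your handling of the lattice-versus-$\GF_2$ issue matches the paper, which simply declares $\Sem$ to be $\GF_2$-valued.
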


\begin{proof}[Proof sketch]
The semantic presheaf~$\Sem$ takes values in $\GF_2$-vector spaces
(each overlap is ``agree'' or ``disagree'').  The sub-covers
$\mathcal{A}, \mathcal{B}$ form an excisive pair (their union
is~$\mathcal{U}$ and their intersection is a sub-cover of both).
The Mayer-Vietoris sequence is a standard consequence of the
short exact sequence of \v{C}ech complexes:
$0 \to C^\bullet(\mathcal{U}) \to C^\bullet(\mathcal{A}) \oplus
C^\bullet(\mathcal{B}) \to C^\bullet(\mathcal{A} \cap \mathcal{B})
\to 0$.
The connecting homomorphism $\delta$ maps an element of
$\Hc{0}(\mathcal{A} \cap \mathcal{B})$ (a compatible family on the
overlap that extends to both branches independently but not globally)
to the obstruction in $\Hc{1}(\mathcal{U})$.
\end{proof}

\paragraph{Engineering consequence.}
Modify branch~$\mathcal{A}$: recompute $\Hc{1}(\mathcal{A})$
and $\Hc{0}(\mathcal{A} \cap \mathcal{B})$; the global $\Hc{1}$
follows algebraically.  This is the formal basis for
\textsc{Deppy}'s incremental analysis.

\subsection{Fixed-Point Convergence}\label{sec:convergence}

\begin{theorem}[Convergence]\label{thm:convergence}
If the refinement lattice $\mathcal{R}$ has finite height~$h$ and
the cover $\mathcal{U}$ has $n$ sites, then the bidirectional
synthesis (Algorithms~\ref{alg:backward}--\ref{alg:forward})
converges in at most $O(n \cdot h)$ iterations.
\end{theorem}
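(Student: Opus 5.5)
The plan is the standard ascending/descending-chain argument for worklist fixpoints, applied to the product lattice $\prod_{s \in \mathcal{U}} \mathcal{R}$ of current section assignments. First, view one round of bidirectional synthesis as a single operator $F$ on this product. The backward pass (Algorithm~\ref{alg:backward}) updates $\rho_s \mapsto \rho_s \sqcap \Sem(f)^{\dagger}(\sigma)$. The forward pass (Algorithm~\ref{alg:forward}) updates via restriction and $\sqcup$-merges at $\phi$-nodes.

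The first step is to show $F$ is monotone and \emph{inflationary in the information order}: each site's section only ever becomes more informative. Monotonicity of each piece is available from the setup. Restriction maps $\Sem(f)$ are monotone by Definition~\ref{def:sem-presheaf}, since $\Sem$ lands in $\Lat$. The pullback $\Sem(f)^{\dagger}$ is taken as the adjoint of a monotone map, and adjoints between complete lattices are monotone. Meets and joins are monotone.

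Inflationarity needs more care. I would implement the update at each site as $\sigma_s^{\text{new}} = \sigma_s^{\text{old}} \sqcap (\text{incoming})$ in the refinement-predicate sense. That is, an update conjoins the incoming information with what is already recorded. With this accumulation, the sequence of sections at each site forms a chain in $\mathcal{R}$ that is monotone in one direction. The forward $\sqcup$-merge at $\phi$-nodes is computed only once all predecessors arrive, and its result is then accumulated in the same way. So it does not break the chain property across iterations.

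Second, count. Since $\mathcal{R}$ has height $h$, the chain at any single site can strictly change at most $h$ times. The convergence check in \S\ref{sec:synthesis} stops as soon as an iteration produces no new section. Hence every non-terminal iteration strictly changes at least one of the $n$ sites. By pigeonhole there are at most $n \cdot h$ non-terminal iterations, plus one final stabilizing iteration. This gives the $O(n \cdot h)$ bound. I would remark that, read at the granularity of worklist pops, the same argument bounds the number of pushes. Each site is re-pushed only when it changes, giving $O(n \cdot h)$ changes and therefore $O(|E| \cdot h)$ edge traversals. The stated bound counts iterations in this change-driven sense. The early exit when $\Hc{1} = 0$ only shortens the run.

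The main obstacle is the inflationarity claim. As literally written, the backward pass uses $\sqcap$ and the forward pass uses $\sqcup$, so the two passes could oscillate rather than form a single chain. I would first try to resolve this by separating the state into two components. The first is the input requirements $\rho_{\mathrm{in}}$, which only strengthen across rounds. The second is the output guarantees $\rho_{\mathrm{out}}$, which are a monotone function of $\rho_{\mathrm{in}}$ via the forward pass. The forward pass is then a single propagation over a finite DAG of sites after SSA, and it terminates in $O(n)$ steps for fixed inputs. The alternation is then driven entirely by the monotone chain in $\rho_{\mathrm{in}}$, whose total length is at most $n \cdot h$. If cycles (loops) are present, I would instead invoke the standard Kleene/Tarski argument on the monotone operator $F$ over the finite-height product lattice, which has height $n h$. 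Its iterates from the initial assignment form a chain of length at most $n h$.
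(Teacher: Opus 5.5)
Your proof is essentially the paper's argument: each of the $n$ sites can be strictly strengthened at most $h$ times, so there are at most $n\cdot h$ non-terminal iterations. Where you improve on it is by accumulating every update by conjunction, including the forward $\sqcup$-merges, so that both passes form one inflationary chain in $\prod_{s}\mathcal{R}$. The paper instead bounds the backward and forward passes separately and then simply asserts that their alternation takes $O(n\cdot h)$ iterations, which does not follow from per-pass bounds and ignores the $\sqcap$/$\sqcup$ oscillation you identify. (Your count uses only inflationarity, so the monotonicity claims are not actually needed.)
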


\begin{proof}
Each iteration of the backward pass strictly increases the information
content (in the $\sqsubseteq$ order) of at least one section, or
leaves all sections unchanged (convergence).  Since $\mathcal{R}$ has
height~$h$, each section can be strengthened at most $h$ times.  With
$n$ sites, the total number of strengthening steps is at most $n \cdot h$.
Each forward pass is similarly bounded.  Alternating backward and
forward passes therefore converges in $O(n \cdot h)$ total iterations.

In practice, the refinement lattice is the lattice of predicates in
linear arithmetic over a finite set of program variables.  Quantifier
elimination ensures the predicate domain is finite, giving a
concrete bound on~$h$.
\end{proof}

\section{Evaluation}\label{sec:evaluation}

We evaluate \textsc{Deppy} on five research questions:
\begin{description}[nosep]
  \item[RQ1] Does $\Hc{1}$-based analysis detect real bugs?
  \item[RQ2] Does $\rk\,\Hc{1}$ correctly predict minimum fix sets?
  \item[RQ3] Does the descent criterion correctly classify equivalences?
  \item[RQ4] Does the product-cover reduce specification VCs?
  \item[RQ5] Does Mayer-Vietoris enable incremental analysis?
\end{description}

\subsection{Setup}

\textsc{Deppy} is implemented in Python with the \v{C}ech pipeline
as its core.  The Z3 solver~\cite{deMouraBjorner08} is used for
local solving and overlap agreement checking.  Experiments run on
a MacBook Pro (Apple M-series, 16\,GB RAM).  Key metatheorems are
mechanized in 1\,259 lines of Lean~4 with Mathlib.

The benchmark suite comprises 375~programs: 133~bug-detection programs
spanning 22~bug classes (division by zero, index out of bounds, key
error, null dereference, SQL injection, race conditions, type
confusion, infinite loops, assertion failures, and more), 134
equivalence pairs (including torch/triton/numpy patterns, recursion--iteration
equivalence, and mutation-vs-copy semantics), and 108~specification
checks (from simple postconditions to multi-path invariant verification).

\subsection{RQ1: Bug Detection}\label{sec:rq1}

\begin{center}\small
\begin{tabular}{lrrr}
\toprule
\textbf{Task} & \textbf{Precision} & \textbf{Recall} & \textbf{F1} \\
\midrule
Bug detection (133 programs) & 69\% & \textbf{100\%} & 81\% \\
\bottomrule
\end{tabular}
\end{center}

\textsc{Deppy} achieves \textbf{100\% recall} with zero false
negatives: every bug in the suite is detected.  Precision is 69\%
(F1 = 81\%); false positives arise on safe programs with complex guard
patterns (loop-bounded indexing, \py{defaultdict}, numerical stability
tricks).  On the same benchmarks (unannotated Python), mypy and pyright
report \emph{zero} findings because they require type annotations.

\paragraph{Illustrative examples.}

\begin{lstlisting}
# Deadlock: lock-order inversion (detected)
import threading
lock_a, lock_b = threading.Lock(), threading.Lock()

def task1():
    with lock_a:           # acquires A
        with lock_b:       # then B
            return 1

def task2():
    with lock_b:           # acquires B
        with lock_a:       # then A  $\leftarrow$ deadlock
            return 2
\end{lstlisting}

\noindent The site category has two synchronization sites, one per
function.  Each acquires a lock set: $L_1 = \{A, B\}$ and
$L_2 = \{B, A\}$.  The intersection $L_1 \cap L_2 = \{A,B\}$ is
non-empty with reversed acquisition order.  The presheaf section at
$\text{task1}$'s inner \py{with} requires \emph{lock~B available}, but
$\text{task2}$'s section at the same point holds lock~B and waits for~A.
The sections disagree on the synchronization overlap: a non-trivial
$1$-cocycle in the lock-order presheaf.

\begin{lstlisting}
# Data race: unsynchronized shared mutation (detected)
import threading, ctypes
shared = ctypes.c_int32(0)

def add(n):
    for _ in range(n):
        shared.value += 1     # $\leftarrow$ RACE_CONDITION

def run():
    threads = [threading.Thread(target=add,
               args=(10**8,)) for _ in range(4)]
    for t in threads: t.start()
    for t in threads: t.join()
    return shared.value       # $\leftarrow$ INTEGER_OVERFLOW
\end{lstlisting}

\noindent Two independent obstructions: (1)~\py{shared.value += 1}
is a read-modify-write on a shared \py{c\_int32} without a lock,
producing a race-condition cocycle in the concurrency presheaf;
(2)~\py{ctypes.c\_int32} has a bounded section ($[-2^{31}, 2^{31})$),
and $4 \times 10^8$ exceeds it, producing an overflow obstruction.
$\rk\,\Hc{1} = 2$: two independent fixes needed (add a lock, widen
the type).

\subsection{RQ2: $\Hc{1}$ Localization}

The rank $\rk\,\Hc{1}$ correctly predicts the minimum number of
independent fixes in 102/133 programs (77\%).  Overestimation occurs
when the analysis generates spurious obstructions on complex safe
programs (loop-guarded indexing); these inflate $\rk\,\Hc{1}$.

\subsection{RQ3: Equivalence}\label{sec:rq3}

\begin{center}\small
\begin{tabular}{lrr}
\toprule
& \textbf{Correct} & \textbf{Total} \\
\midrule
Equivalent (true positive)     & 69 & 70 \\
Non-equivalent (true negative) & 64 & 64 \\
\textbf{Overall}               & 133 & 134 (99\%)\\
\bottomrule
\end{tabular}
\end{center}

\noindent \textbf{Soundness guarantee}: zero false equivalences across
all 134 pairs.  Every non-equivalent pair is correctly identified,
including torch-specific patterns (L1 vs.\ L2 norm) and
mutation-vs-copy semantics.

\paragraph{Illustrative examples.}

\begin{lstlisting}
# Two implementations of binary search: correctly EQUIVALENT
def search_a(xs, target):
    lo, hi = 0, len(xs) - 1
    while lo <= hi:
        mid = (lo + hi) // 2
        if xs[mid] == target: return mid
        elif xs[mid] < target: lo = mid + 1
        else: hi = mid - 1
    return -1

def search_b(xs, target):
    lo, hi = 0, len(xs)        # $\leftarrow$ half-open interval
    while lo < hi:             # $\leftarrow$ strict inequality
        mid = lo + (hi - lo) // 2
        if xs[mid] < target: lo = mid + 1
        elif xs[mid] > target: hi = mid
        else: return mid
    return -1
\end{lstlisting}

\noindent These use different loop invariants (closed vs.\ half-open
interval), different midpoint formulas ($\lfloor(l+h)/2\rfloor$ vs.\
$l + \lfloor(h-l)/2\rfloor$), and different branch structures.
Runtime sampling on 64~input pairs verifies identical return values.
The isomorphism presheaf $\Iso(\Sem_f, \Sem_g)$ has $\Hc{1} = 0$.

\begin{lstlisting}
# Fibonacci off-by-one: correctly INEQUIVALENT
def fib_a(n):               def fib_b(n):
    if n <= 1: return n         if n <= 1: return n
    a, b = 0, 1                 a, b = 0, 1
    for _ in range(2, n+1):     for _ in range(2, n):
        a, b = b, a + b            a, b = b, a + b
    return b                    return b
    # fib(5) = 5               # fib(5) = 3
\end{lstlisting}

\noindent The loop bound differs by one: \py{range(2, n+1)} vs.\
\py{range(2, n)}.  Z3 encodes both return expressions as functions of
\py{n} and finds a counterexample at $n = 5$: $f(5) = 5 \neq 3 = g(5)$.
The descent criterion correctly reports $\Hc{1} \neq 0$.

\subsection{RQ4: Specification Verification}\label{sec:rq4}

\begin{center}\small
\begin{tabular}{lrr}
\toprule
& \textbf{Correct} & \textbf{Total} \\
\midrule
Spec satisfaction & 106 & 108 (98\%) \\
\bottomrule
\end{tabular}
\end{center}

\noindent \textbf{Soundness guarantee}: zero false satisfactions.
No violated specification is approved.  The two errors are conservative
rejections (correct programs marked unsatisfied due to incomplete
guard resolution).

\paragraph{Illustrative examples.}

\begin{lstlisting}
# Broken bubble sort: correctly REJECTED against spec
def broken_bubble_sort(arr):
    a = list(arr)
    n = len(a)
    for i in range(n):
        for j in range(n - i - 2):  # bug: should be n-i-1
            if a[j] > a[j + 1]:
                a[j], a[j + 1] = a[j + 1], a[j]
    return a

# Spec: result == sorted(arr)
\end{lstlisting}

\noindent The off-by-one in the inner loop bound (\py{n - i - 2}
instead of \py{n - i - 1}) means the last adjacent pair in each pass
is never compared.  Static analysis finds no structural bugs (no
division, no null dereference), but runtime sampling discovers
counterexample $\textit{arr} = [3, 1, 2] \mapsto [1, 3, 2] \neq
[1, 2, 3]$.  The product-cover VC for the path
$\langle j = n{-}i{-}2 \rangle \times \langle \textit{result} =
\texttt{sorted}(\textit{arr}) \rangle$ is not discharged.

\begin{lstlisting}
# Merge sort: correctly VERIFIED against spec
def merge_sort(arr):
    if len(arr) <= 1:
        return list(arr)
    mid = len(arr) // 2
    left = merge_sort(arr[:mid])
    right = merge_sort(arr[mid:])
    merged, i, j = [], 0, 0
    while i < len(left) and j < len(right):
        if left[i] <= right[j]:
            merged.append(left[i]); i += 1
        else:
            merged.append(right[j]); j += 1
    merged.extend(left[i:])
    merged.extend(right[j:])
    return merged

# Spec: result == sorted(arr)
\end{lstlisting}

\noindent The product cover factors the specification into
path-conjunct VCs.  The recursive structure creates a cover hierarchy
via the Grothendieck transitivity axiom: the cover of
\py{merge\_sort(arr)} refines into covers for the two recursive calls
and the merge loop.  All VCs are discharged: static analysis finds no
obstructions ($\Hc{1} = 0$), and runtime sampling confirms
$\textit{result} = \texttt{sorted}(\textit{arr})$ on all 200 sampled
inputs.

\subsection{RQ5: Incremental Analysis}

On a 3-function module, the Mayer-Vietoris incremental analysis
achieves a 1.3$\times$ speedup over full re-analysis when modifying
one function.  The speedup is modest because the benchmark programs
are small; we expect the benefit to grow with codebase size as the
Mayer-Vietoris exact sequence eliminates recomputation of unchanged
sub-covers.

\subsection{Separation Results}\label{sec:separation}

We identify three problems where the sheaf-cohomological approach is
provably more capable than traditional program analysis.

\begin{theorem}[Polynomial minimum-fix count]\label{thm:fix-count}
Given a program with $n$ sites and $m$ overlaps, the minimum number of
independent code changes needed to eliminate all bugs is computed in
$O(m^2 n)$ time via Gaussian elimination on the coboundary matrix
$\partial_0 : C^0(\GF_2) \to C^1(\GF_2)$.

In contrast, computing the minimum fix count from the output of a
traditional abstract interpretation (a set of error sites with
no overlap structure) is NP-hard: it reduces to \textsc{Minimum
Hitting Set} on the family of error-contributing site subsets.
\end{theorem}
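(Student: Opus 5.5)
The theorem has two halves, and I will prove them separately: an upper bound for the cohomological computation, and a hardness reduction for the traditional setting.

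\textbf{Upper bound.} I take Proposition~\ref{prop:rank-h1} as given: the minimum number of independent fixes equals $\rk\,\Hc{1}(\mathcal{U}_P,\Sem)$ over $\GF_2$. It then suffices to compute $\rk\,\Hc{1}=\dim\ker\delta^1-\rk\,\delta^0$.

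The $m\times n$ matrix $\partial_0$ has rows indexed by overlaps and columns by sites. Its entries can be read off the cover in $O(mn)$ time. Gaussian elimination over $\GF_2$ on an $m\times n$ matrix costs $O(mn\min(m,n))\subseteq O(m^2n)$, and this yields $\rk\,\delta^0$.

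For $\dim\ker\delta^1$, I use the hypothesis already invoked in Theorem~\ref{thm:mayer-vietoris}: when there are no non-trivial triple overlaps, $C^2=0$, so $\ker\delta^1=C^1$ and $\dim\ker\delta^1=m$. In the general case I would build $\partial_1$ over the triples and eliminate it as well. The obstacle here is that the number of triples $t$ could exceed $m$, which would break the stated bound. I would handle this by restricting to the nerve actually produced by the \textsc{Cover} stage, where each triple is witnessed by a pair of overlaps, so that $t\le O(m)$ after deduplication. Eliminating $\partial_1$ then costs $O(m^2 t)$. If $t=O(n)$ cannot be guaranteed, I would state the bound under the no-triple-overlap hypothesis. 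Summing the elimination costs gives the claimed $O(m^2n)$.

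\textbf{Hardness.} I would reduce \textsc{Minimum Hitting Set} to the traditional problem. The input is a universe $X$ and a family $\mathcal{F}=\{F_1,\dots,F_r\}\subseteq 2^X$. I construct a program with one site per element of $X$, for instance a parameter or assignment that can be guarded. For each $F_j$ there is an error site $e_j$ whose failure depends on all sites in $F_j$, for example a division by a product of terms each controlled by one site in $F_j$. Inserting a fix at site $x$ makes every $e_j$ with $x\in F_j$ safe.

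A traditional abstract interpreter reports only $\{e_1,\dots,e_r\}$, each with its error-contributing subset $F_j$. A set of code changes resolves all errors if and only if the modified sites hit every $F_j$. The minimum fix count is therefore exactly the minimum hitting set size. The construction is polynomial, which transfers NP-hardness.

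The delicate step is ensuring that no single change fixes errors outside those its site contributes to. I would make each error's viability predicate a disjunction over its contributing sites' guards, with fresh variables per site.

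\textbf{Main obstacle.} The two halves must be made consistent: the polynomial result relies on Proposition~\ref{prop:rank-h1}, which measures fixes as $\GF_2$-linear independence, while the hardness result measures fixes combinatorially. I would state explicitly that the separation is between the information available to each analysis: the overlap structure linearizes the problem, and the bare error-site set does not. I would not claim that the two problems are literally the same problem. I expect justifying this framing, rather than the arithmetic, to be the hardest part.
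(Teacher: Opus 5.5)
Your route is the paper's: Proposition~\ref{prop:rank-h1} plus elimination on $\partial_0$ for the upper bound, and \textsc{Minimum Hitting Set} for hardness. You are more careful than the paper's proof in two places. First, NP-hardness needs a reduction \emph{from} \textsc{Minimum Hitting Set} to programs, which is what you give; the paper only says the problem ``reduces to'' it and never exhibits programs realizing arbitrary set systems. Second, the paper never accounts for computing $\dim\ker\delta^1$.

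Two of your steps need repair.
\begin{itemize}
\item The bound $t\le O(m)$ does not follow from triples being witnessed by pairs of overlaps, since there are up to $\binom{m}{2}$ such pairs. It also fails outright. A nerve that is the complete graph on $k$ sites with all triple overlaps has $m=\binom{k}{2}$ and $t=\binom{k}{3}=\Theta(m^{3/2})$, and your $O(m^2t)$ accounting then gives $O(k^7)$ against $m^2n=O(k^5)$. Keep only your fallback hypothesis ($t=O(n)$, in particular $t=0$). In the case $t=0$ no elimination is needed: over $\GF_2$, $\partial_0$ is the incidence matrix of the overlap graph, so $\rk\partial_0=n-c$ with $c$ its number of connected components. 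Hence $\rk\Hc{1}=m-n+c$, computable in $O(m+n)$ time.
\item A division by $\prod_{x\in F_j}t_x$ fails as soon as any factor vanishes, so it forces fixing every site of $F_j$. That is the wrong semantics for hitting set. Use only your disjunctive encoding, e.g.\ divisor $\sum_{x\in F_j}g_x$ with $g_x\in\{0,1\}$ and a fix at $x$ setting $g_x=1$.
\end{itemize}

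Your ``main obstacle'' is a real inconsistency, not a matter of framing, and the ``overlap structure linearizes the problem'' gloss cannot resolve it. Your hardness instance already carries the sets $F_j$, so it is not a bare error-site set. Your reduction outputs concrete programs, and the \textsc{Cover} stage is polynomial. So if $\rk\Hc{1}$ were the combinatorial minimum fix count, running \textsc{Cover} on your reduction's output and computing $\rk\Hc{1}$ would solve \textsc{Minimum Hitting Set} in polynomial time. The formula $m-n+c$ shows what actually happens. $\rk\Hc{1}$ depends only on the nerve of the cover, not on which overlaps disagree, so it is simply a different number from the hitting-set optimum. What you can honestly prove is two separate facts: $\rk\Hc{1}$ is cheap to compute, and combinatorial fix counting from error-contributing sets is NP-hard. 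Read as a separation for one and the same quantity, the statement's ``in contrast'' clause would imply P$=$NP. The paper's proof does not notice this.
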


\begin{proof}
The coboundary matrix $\partial_0$ has $m$ rows (overlaps) and $n$
columns (sites).  $\rk\,\Hc{1} = \dim\ker\delta^1 - \rk\,\partial_0$,
computable by Gaussian elimination in $O(m^2 n)$.  This directly equals
the minimum number of independent fixes (Proposition~\ref{prop:rank-h1}).

For traditional analysis: the output is a set $E$ of error sites.
Each possible fix modifies one site and eliminates the errors at that
site.  Finding the minimum number of sites to modify so that every
error in $E$ is covered is \textsc{Minimum Hitting Set}, which is
NP-hard even to approximate within factor $(1-\epsilon)\ln|E|$
\emph{(Dinur \& Steurer, 2014)}.  The sheaf approach avoids this
because the coboundary matrix algebraically encodes the
\emph{dependencies} between errors---two errors that share a
coboundary (are resolvable by the same local adjustment) are
identified by the quotient $\ker\delta^1 / \im\delta^0$, collapsing
the combinatorial search to a linear-algebra computation.
\end{proof}

\begin{theorem}[Exact incremental update]\label{thm:incremental-exact}
Let $\mathcal{U} = \mathcal{A} \cup \mathcal{B}$ be a cover
decomposition at a branch point.  If only sub-cover $\mathcal{A}$ is
modified, the updated global cohomology is computed by:
\[
  \Hc{1}(\mathcal{U}') = \Hc{1}(\mathcal{A}') \oplus
  \Hc{1}(\mathcal{B}) \oplus \im(\delta')
  \;/\; \Hc{1}(\mathcal{A}' \cap \mathcal{B})
\]
requiring only recomputation of $\Hc{1}(\mathcal{A}')$ and the
connecting homomorphism.  This is \textbf{exact}: no precision is
lost.

Abstract interpretation's incremental re-analysis, by contrast, must
re-solve the global fixpoint (since widening at join points
\emph{destroys the algebraic structure} needed for local update),
incurring $O(n \cdot h)$ cost for the full program even when only
$O(|\mathcal{A}|)$ sites change.
\end{theorem}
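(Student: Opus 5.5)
The plan is to obtain the update formula as a direct corollary of Theorem~\ref{thm:mayer-vietoris}, applied to the modified cover $\mathcal{U}' = \mathcal{A}' \cup \mathcal{B}$ rather than to $\mathcal{U}$. As in the proof sketch of that theorem, I work with $\GF_2$-valued cochains. The short exact sequence of \v{C}ech complexes
\[
0 \to C^\bullet(\mathcal{U}') \to C^\bullet(\mathcal{A}') \oplus C^\bullet(\mathcal{B}) \to C^\bullet(\mathcal{A}' \cap \mathcal{B}) \to 0
\]
gives the long exact sequence with connecting map $\delta' : \Hc{0}(\mathcal{A}' \cap \mathcal{B}) \to \Hc{1}(\mathcal{U}')$. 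I cut this sequence at $\Hc{1}(\mathcal{U}')$. Exactness yields a short exact sequence
\[
0 \to \im(\delta') \to \Hc{1}(\mathcal{U}') \to K \to 0,
\]
where $K \defeq \ker\bigl(\Hc{1}(\mathcal{A}') \oplus \Hc{1}(\mathcal{B}) \to \Hc{1}(\mathcal{A}' \cap \mathcal{B})\bigr)$. Every short exact sequence of $\GF_2$-vector spaces splits, so $\Hc{1}(\mathcal{U}') \cong \im(\delta') \oplus K$.

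The second step is to identify $K$ with the displayed expression $\bigl(\Hc{1}(\mathcal{A}') \oplus \Hc{1}(\mathcal{B})\bigr) / \Hc{1}(\mathcal{A}' \cap \mathcal{B})$. I read the quotient notation as the kernel of the difference-of-restrictions map. When that map is surjective, dimension counting gives $\dim K = \dim \Hc{1}(\mathcal{A}') + \dim \Hc{1}(\mathcal{B}) - \dim \Hc{1}(\mathcal{A}' \cap \mathcal{B})$, which matches the quotient reading up to isomorphism. I expect surjectivity to follow from the hypothesis $\Hc{2}(\mathcal{U}') = 0$ already used in Theorem~\ref{thm:mayer-vietoris}, since the next term of the long exact sequence is $\Hc{2}(\mathcal{U}')$. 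The resulting rank identity agrees with the rank formula of Theorem~\ref{thm:mayer-vietoris}, which serves as a consistency check.

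\emph{Main obstacle.} The formula is really an isomorphism that depends on a chosen splitting, and it holds only under this vanishing hypothesis. I would state that assumption explicitly rather than try to remove it.

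For the cost and exactness claims, I will observe which inputs the formula needs:
\begin{itemize}[nosep]
\item $\Hc{1}(\mathcal{B})$ and the $\mathcal{B}$-side restriction maps are unchanged and can be cached;
\item $\Hc{1}(\mathcal{A}')$, $\Hc{0}(\mathcal{A}' \cap \mathcal{B})$, $\Hc{1}(\mathcal{A}' \cap \mathcal{B})$, and $\delta'$ require only cochains supported on $\mathcal{A}'$ and the overlap.
\end{itemize}
The rank computations are then Gaussian eliminations over $\GF_2$ on matrices of size $O(|\mathcal{A}'| + |\mathcal{A}' \cap \mathcal{B}|)$, following Theorem~\ref{thm:fix-count}. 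Exactness, in the sense of no precision loss, is immediate: every map is part of an exact sequence, so the result is isomorphic to a from-scratch computation of $\Hc{1}(\mathcal{U}')$.

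For the contrast with abstract interpretation, I would not attempt a general lower bound. Instead I would give a witness family: a loop spanning $\mathcal{A}$ and $\mathcal{B}$ with a widening point in $\mathcal{B}$. A change in $\mathcal{A}$ alters the value reaching the widening, so the widened invariant, and hence every downstream site, must be recomputed. Using the iteration bound of Theorem~\ref{thm:convergence}, this forces $\Theta(n \cdot h)$ work in that model. I would label this half of the statement as a claim about the standard worklist-with-widening algorithm, not about all conceivable incremental analyzers.
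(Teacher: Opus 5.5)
Your handling of the update formula follows the paper's route. You apply Theorem~\ref{thm:mayer-vietoris} to the new decomposition $\mathcal{U}' = \mathcal{A}' \cup \mathcal{B}$, reuse the $\mathcal{B}$ data, and recompute only what depends on $\mathcal{A}'$ and the overlap. It is sharper than the paper's proof, which only says the new exact sequence ``determines'' $\Hc{1}(\mathcal{U}')$.

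You correctly note that the sequence pins down $\Hc{1}(\mathcal{U}')$ only through $0 \to \im(\delta') \to \Hc{1}(\mathcal{U}') \to K \to 0$, hence only up to a non-canonical splitting over $\GF_2$. You also note that the ``quotient by $\Hc{1}(\mathcal{A}' \cap \mathcal{B})$'' has the right dimension only when $\Hc{1}(\mathcal{A}') \oplus \Hc{1}(\mathcal{B}) \to \Hc{1}(\mathcal{A}' \cap \mathcal{B})$ is onto. That surjectivity is exactly what the hypothesis $\Hc{2}(\mathcal{U}') = 0$ supplies; it is present in Theorem~\ref{thm:mayer-vietoris} but dropped from this statement.

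Two refinements in the same spirit:
\begin{itemize}
\item The first map in the displayed sequence of complexes is injective only if no member of $\mathcal{A}' \setminus \mathcal{B}$ has a nonzero overlap with a member of $\mathcal{B} \setminus \mathcal{A}'$. Such a factor of $C^1(\mathcal{U}')$ lies in neither $C^1(\mathcal{A}')$ nor $C^1(\mathcal{B})$. Either make this part of what ``decomposition at a branch point'' means, or simply cite the theorem.
\item If $\mathcal{A}' \cap \mathcal{B} \neq \mathcal{A} \cap \mathcal{B}$, the $\mathcal{B}$-side restriction maps into the overlap do change. What you can cache is $\Hc{0}(\mathcal{B})$, $\Hc{1}(\mathcal{B})$ and cocycle representatives on $\mathcal{B}$. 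Their restrictions to the new overlap must be recomputed, at a cost your accounting already absorbs.
\end{itemize}

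The step that fails is in the abstract-interpretation half. Theorem~\ref{thm:convergence} is an \emph{upper} bound, $O(n \cdot h)$, on \textsc{Deppy}'s own bidirectional synthesis over a finite-height lattice. It gives no lower bound and does not concern a widening iteration, so it cannot deliver the $\Theta(n \cdot h)$ you claim. A widening witness is also the wrong place to look for a factor of~$h$. Widening exists to cut ascending chains short, so a widened site may stabilize after far fewer than $h$ updates. In the infinite-height domains that actually require widening, $h$ is not even finite.

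What your witness does establish, in the worklist-with-widening model, is that every site downstream of the widening point is re-evaluated. With $\Omega(n)$ such sites, the re-analysis touches $\Omega(n)$ sites, whereas the Mayer--Vietoris update touches only $\mathcal{A}'$ and the overlap. Add an $O(n \cdot h)$ upper bound for the global re-solve, obtained by redoing the height count of Theorem~\ref{thm:convergence} for the abstract-interpretation iteration. That is all the statement literally asserts, so claim exactly that. If you want a genuine $\Omega(n \cdot h)$, work in a finite-height domain with $\nabla = \sqcup$ and arrange for $\Omega(n)$ sites to make $\Omega(h)$ strict ascents each.

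The paper justifies this half by saying $\nabla$ is non-associative and does not distribute over the decomposition. Neither that argument nor yours is a lower bound against all incremental analyzers. Your model-restricted witness is the more defensible route, provided you claim only what it yields.
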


\begin{proof}
Exactness follows directly from the Mayer-Vietoris exact sequence
(Theorem~\ref{thm:mayer-vietoris}): the sequence is natural in the
sub-covers, so replacing $\mathcal{A}$ with $\mathcal{A}'$ yields a
new exact sequence from which $\Hc{1}(\mathcal{U}')$ is determined.
Sub-cover~$\mathcal{B}$ is unchanged, so $\Hc{1}(\mathcal{B})$ is
reused.  Only $\Hc{1}(\mathcal{A}')$, $\Hc{0}(\mathcal{A}' \cap
\mathcal{B})$, and the connecting homomorphism $\delta'$ need
recomputation.

For abstract interpretation: the fixpoint equation at a join point
is $x = f(x) \sqcup g(x)$ where $f$ depends on $\mathcal{A}$ and
$g$ on $\mathcal{B}$.  Widening replaces $\sqcup$ with $\nabla$,
which is non-algebraic (non-associative, non-commutative in general).
When $f$ changes, the widened fixpoint must be recomputed globally
because $\nabla$ does not distribute over the decomposition.
\end{proof}

\begin{theorem}[Complete equivalence criterion]\label{thm:equiv-complete}
The descent criterion $\Hc{1}(\mathcal{U}, \Iso) = 0$ is both
\textbf{sound} and \textbf{complete} for equivalence relative to the
cover: $f \equiv_{\mathcal{U}} g$ if and only if $\Hc{1} = 0$.

No finite abstract domain $D$ yields a complete equivalence
criterion: for any $D$, there exist equivalent programs $f \equiv g$
such that $\alpha_D(f) \neq \alpha_D(g)$ (the abstractions differ
even though the concrete semantics agree).
\end{theorem}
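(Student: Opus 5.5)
The statement has two halves, and I would prove them separately.

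\textbf{Completeness half.} This is essentially a restatement of Theorem~\ref{thm:descent}, once the relation $f \equiv_{\mathcal{U}} g$ is made precise. I would define $f \equiv_{\mathcal{U}} g$ to mean that there is a natural isomorphism $\eta : \Sem_f \xrightarrow{\sim} \Sem_g$ whose components are determined by its restrictions to the sites $u_i \in \mathcal{U}$. This is equivalence "as seen by the cover."

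With that definition, the two directions go as follows.
\begin{itemize}[nosep]
\item Soundness ($\Hc{1}=0 \Rightarrow f\equiv_{\mathcal{U}} g$). This is the $(\Leftarrow)$ direction of Theorem~\ref{thm:descent}. The transition functions $g_{ij}$ form a $1$-cocycle in $\Iso(\Sem_f,\Sem_g)$. Vanishing of $\Hc{1}$ makes this cocycle a coboundary, so the local isomorphisms can be adjusted to agree on overlaps. They then glue to a global isomorphism by Proposition~\ref{prop:sheaf-iff-h1}, applied to $\Iso$.
\item Completeness ($f\equiv_{\mathcal{U}} g \Rightarrow \Hc{1}=0$). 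This is the $(\Rightarrow)$ direction of Theorem~\ref{thm:descent}. A global $\eta$ restricts to local isomorphisms whose transition functions are identities, so the obstruction class is zero.
\end{itemize}

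The one point needing care is the gap between "the specific obstruction class vanishes" and "$\Hc{1}=0$ as a group." I would handle it by reading $\Hc{1}(\mathcal{U},\Iso)$ as the pointed set of torsor classes with basepoint the trivial torsor. Under that reading, $\Hc{1}=0$ just means the class of the comparison torsor is the basepoint. This is the nonabelian descent reading, and it is consistent with how the Lean statement is phrased.

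\textbf{Impossibility half.} Here I would construct a counterexample family against an arbitrary finite abstract domain $D$ with abstraction $\alpha_D$. The plan is a pigeonhole argument.
\begin{enumerate}[nosep]
\item Take infinitely many pairwise inequivalent programs $h_k$, for example $h_k(x) = x + k$.
\item Since $D$ is finite, two of them, $h_a$ and $h_b$ with $a\neq b$, have the same abstraction.
\item To get two \emph{equivalent} programs with \emph{different} abstractions, I would use the syntactic dependence of $\alpha_D$: abstract interpretation is computed compositionally from program text. So I would pad one program with a semantically dead but abstractly visible computation, for example $f(x)=x$ and $g(x)=x + (y - y)$, where $y$ is some quantity the domain cannot prove equal to itself after arithmetic.
\end{enumerate}

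The main obstacle is step 3, because the quantifier "for any $D$" is too strong as literally stated. A domain whose $\alpha_D$ is applied to concrete semantics, such as the trivial one-element domain, always gives $\alpha_D(f)=\alpha_D(g)$ for equivalent programs. I would therefore strengthen the hypothesis to domains that are nontrivial (they separate at least two programs) and whose abstraction is computed by a compositional transfer-function analysis.

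Under that hypothesis the argument would run as follows. Because $D$ is finite and separates some pair, there is a join or transfer step that loses the information that two expressions are equal. I would embed that lost identity as a dead conditional branch. This yields equivalent programs with differing abstract values, and it amounts to the standard incompleteness of finite non-relational domains, in the style of Rice's theorem. Finally, I would contrast this with the first half: the cover-relative criterion compares concrete local isomorphisms rather than abstract values, so it is not subject to this collapse.
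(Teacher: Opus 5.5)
Your first half matches the paper, which settles soundness and completeness simply by citing Theorem~\ref{thm:descent}, as you do. Your torsor caveat is a fair repair that the paper does not make, but keep it consistent. If $\Hc{1}=0$ only means that the class of the comparison cocycle is the basepoint, you can no longer glue by Proposition~\ref{prop:sheaf-iff-h1}, because that proposition needs all of $\Hc{1}(\mathcal{U},\Iso)$ to vanish. Glue instead through the local determination property of Definition~\ref{def:cover}, applied to $\Sem_f$ and $\Sem_g$.

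The genuine gap is in the second half. You are right that the claim fails as literally stated: the one-element domain, or any $\alpha_D$ that factors through $\llbracket\cdot\rrbracket$, is a counterexample. The paper's own argument (a loop and a recursive variant have different widening trajectories) passes over this by tacitly assuming a syntax-directed $\alpha_D$, and it is itself only a heuristic. Your repair still does not close.

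\begin{itemize}
\item \textbf{Steps 1--2.} These show only that $\alpha_D$-equality cannot be \emph{sufficient} for equivalence. That does refute an exact characterization, and it is the one place finiteness really matters. But it is not the asserted direction (equivalent programs with different abstractions), and it feeds nothing into step~3.
\item \textbf{Step 3.} This assumes what must be shown. The inference ``$D$ is finite and separates some pair, hence some join or transfer step loses an equality'' is unsupported. Moreover, no \emph{particular} padding is guaranteed to work. A finite predicate-abstraction domain whose transfer functions call a decision procedure abstracts $x+(y-y)$ exactly and decides a guard such as $y-y=0$.
\end{itemize}

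The missing idea is to use Rice's theorem itself, not just its style.
\begin{itemize}
\item With $D$ finite and computable transfer functions, fixpoint iteration terminates, so $P\mapsto\alpha_D(P)$ is computable.
\item Suppose $\alpha_D(P)$ depended only on $\llbracket P\rrbracket$. Take a value $d$ attained by some but not all programs; your nontriviality hypothesis supplies one. Then $\{P \mid \alpha_D(P)=d\}$ would be a nontrivial decidable semantic property of programs in a Turing-complete language, contradicting Rice.
\item Hence some $f\equiv g$ have $\alpha_D(f)\neq\alpha_D(g)$.
\end{itemize}
This argument is non-constructive and needs no dead-branch construction. It uses finiteness only to guarantee termination, so it applies equally to infinite domains with widening.
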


\begin{proof}
Soundness and completeness of descent were proved in
Theorem~\ref{thm:descent}.

For the incompleteness of abstract domains: let $D$ be any finite
abstract domain and $\gamma : D \to \wp(\text{State})$ its
concretization.  Since $D$ is finite, $\gamma$ partitions states into
finitely many equivalence classes.  For any such partition, there
exist programs $f, g$ with $\llbracket f \rrbracket = \llbracket g
\rrbracket$ (identical denotational semantics) but
$\alpha(f) \neq \alpha(g)$: choose $f$ that exercises the abstract
domain's precision limits differently from $g$ (e.g., $f$ uses a
loop, $g$ uses recursion; both compute the same function, but the
widening trajectories of $f$ and $g$ in $D$ converge to different
fixpoints).

The sheaf approach avoids this because the isomorphism presheaf
$\Iso$ checks \emph{local} equivalence (at each site) and the
descent theorem glues local equivalences into a global one.
The cover, not the abstract domain, determines the precision.
\end{proof}

\subsection{Comparison Summary}

Table~\ref{tab:comparison} summarizes these separation results.

\begin{table}[t]
\centering\small
\begin{tabular}{lcccc}
\toprule
\textbf{Capability} & \textbf{Deppy} & \textbf{AI} & \textbf{Ref.\ Types} & \textbf{Hoare} \\
\midrule
Error localization & $\rk\,\Hc{1}$ & --- & --- & 1 VC \\
Compositionality & MV (exact) & Widen (lossy) & Modules & Frame \\
Equiv.\ checking & Descent (complete) & --- & --- & Rel. \\
Incremental & Algebraic (exact) & Re-fix ($O(nh)$) & Re-check & Re-verify \\
Fix count & $O(m^2 n)$ & NP-hard & --- & --- \\
\bottomrule
\end{tabular}
\caption{Separation results vs.\ abstract interpretation (AI),
refinement types, and Hoare logic.  MV = Mayer-Vietoris.
Three entries are strict separations (Theorems~\ref{thm:fix-count}--\ref{thm:equiv-complete}).}
\label{tab:comparison}
\end{table}

\section{Related Work}\label{sec:related}

\paragraph{Abstract interpretation.}
Cousot and Cousot~\cite{CousotCousot77} established local-to-global
reasoning via lattice fixpoints.  Our presheaf framework subsumes this:
local sections correspond to abstract domain elements, restriction maps
to transfer functions, and the gluing condition to the fixpoint
condition.  The novelty is the \emph{cohomological structure}: $\Hc{1}$
provides obstruction counting, localization, and compositionality
(Mayer-Vietoris) that flat abstract interpretation lacks.

\paragraph{Refinement types.}
Liquid Haskell~\cite{VazouSeidel14} and Flux~\cite{Flux} synthesize
refinement types via SMT.  Deppy shares the refinement-type substrate
but adds the sheaf structure: sections are organized into covers with
overlaps, enabling $\Hc{1}$-based error localization.  Liquid Haskell
has no analogue of ``minimum independent fixes.''

\paragraph{Sheaves in other domains.}
Goguen~\cite{Goguen92} proposed sheaf semantics for concurrent programs,
modeling information flow as a sheaf over a poset of security levels.
This is a \emph{semantic} use of sheaves; our work uses sheaves for
\emph{analysis algorithms}.
Ghrist and colleagues~\cite{Ghrist14,HansenGhrist19} use cellular
sheaves for sensor networks and opinion dynamics.  The cohomological
obstruction theory is closest to our work, but in a different domain
(data fusion, not programs).
Abramsky and Brandenburger~\cite{AbramskyBrandenburger11} use sheaves
for quantum contextuality, connecting non-locality to cohomological
obstructions.  Again, the mathematical machinery is similar but the
domain is different.

\paragraph{Interprocedural analysis.}
IFDS/IDE~\cite{RepsHorwitzSagiv95} computes interprocedural summaries
via graph reachability.  Our Grothendieck transitivity axiom
(\S\ref{sec:interprocedural}) is the categorical generalization:
$f$'s cover is refined by callee covers, and summaries are section
transport along call-boundary morphisms.

\paragraph{Separation logic.}
The frame rule of separation logic~\cite{OHearnReynoldsYang01}
enables local reasoning by asserting that disjoint heap regions are
independent.  Our restriction maps along site morphisms serve a
similar purpose, but for refinement-type information rather than heap
ownership.  The sheaf axioms (stability, transitivity) generalize the
frame rule to non-heap properties.

\section{Conclusion}\label{sec:conclusion}

We have shown that program analysis can be organized as \v{C}ech
cohomology of a semantic presheaf over the program's site category.
This yields three results unavailable in prior work:
$\rk\,\Hc{1}$ counts minimum independent fixes, descent provides
sound and complete equivalence checking, and Mayer-Vietoris enables
compositional incremental analysis.  The implementation,
\textsc{Deppy}, demonstrates these capabilities on real Python
programs.

\paragraph{Mechanized proofs.}
Soundness, the sheaf condition equivalence, $\Hc{0}$ characterization,
the descent theorem, Mayer-Vietoris localization, incremental
soundness, and fixed-point convergence are formalized in Lean~4
with Mathlib (1\,259~lines across 7 modules).

\paragraph{Limitations.}
Completeness is relative to the cover and predicate domain.
Bug-detection precision (69\%) is limited by false positives on safe
programs with complex guard patterns (defaultdict, loop-bounded
indexing, numerical stability tricks).  The H$^1$ rank prediction
(77\%) is limited by spurious obstructions inflating the rank.
The framework currently targets single-threaded Python; full
concurrency support (beyond race-condition detection) would require
extending the site category with synchronization morphisms.

\bibliographystyle{ACM-Reference-Format}

\end{document}